\newcommand{\paper}[1]{}
\newcommand{\nonl}{\renewcommand{\nl}{\let\nl\oldnl}}
\newif\if@restonecol
\definecolor{Gray}{gray}{0.95}
\definecolor{LightCyan}{rgb}{0.88,1,1}
\newcommand{\eat}[1]{}
\renewcommand{\implies}{\Rightarrow}
\newcommand{\comment}[1]{{\color{black} \hfill \CommentSty{// #1}}}
\newcommand{\mcomment}[1]{}
\newcommand{\cba}[1]{{\Delta_{#1}}}
\newcommand{\cbb}[1]{{\Gamma_{#1}}}
\newcommand{\cbar}[1]{\delta_{#1}}
\newcommand{\cbbr}[1]{\gamma_{#1}}
\newcommand{\bfss}{\textsc{bfss}}
\newcommand{\parsyn}{\textsc{parSyn}}
\newcommand{\rsynth}{\textsc{RSynth}}
\newcommand{\cadet}{\textsc{Cadet}}
\newcommand{\unigen}{\textsc{UniGen}}
\newcommand{\abssynorig}{\textsc{AbsSynthe}}
\newcommand{\abssynsk}{\textsc{AbsSynthe-Skolem}}
\newcommand{\bX}{\ensuremath{\mathbf{X}}}
\newcommand{\bY}{\ensuremath{\mathbf{Y}}}
\newcommand{\bZ}{\ensuremath{\mathbf{Z}}}
\newcommand{\bpsi}{\ensuremath{\mathbf{\Psi}}}
\newcommand{\myP}{\ensuremath{\mathsf{P}}}
\newcommand{\NP}{\ensuremath{\mathsf{NP}}}
\newcommand{\BPP}{\ensuremath{\mathsf{BPP}}}
\newcommand{\PSIZE}{\ensuremath{\mathsf{PSIZE}}}
\newcommand{\PH}{\ensuremath{\mathsf{PH}}}
\newcommand{\Ppoly}{\ensuremath{\mathsf{P}/\mathsf{poly}}}
\newcommand{\Sigmatwop}{\ensuremath{\Sigma_2^\myP}}
\newcommand{\Pitwop}{\ensuremath{\Pi_2^\myP}}
\newcommand{\BFS}{\ensuremath{\mathsf{BFnS}}}
\newcommand{\DTIME}{\ensuremath{\mathsf{DTIME}}}
\newcommand{\DNNF}{\ensuremath{\mathsf{DNNF}}}
\newcommand{\AAA}{\ensuremath{\mathsf{A}}}
\newcommand{\lits}[1]{\ensuremath{lits({#1})}}
\tikzset{
block/.style={
  draw, 
  rectangle, 
  minimum height=0cm, 
  minimum width=2cm, align=center,
  }, 
line/.style={->,>=latex'}
}
\tikzstyle{galivenode}=[circle,fill=green!50!black,thick,inner sep=1pt,minimum size=4mm]
\tikzstyle{ralivenode}=[circle,fill=red!70!black,thick,inner sep=1pt,minimum size=4mm]
\tikzstyle{alivenode}=[circle,fill=black!80,thick,inner sep=1pt,minimum size=4mm]
\tikzstyle{deadnode}=[circle,fill=black!10,thick,inner sep=0pt,minimum size=4mm]
\tikzstyle{rdnode}=[circle,draw,fill=red!10,thick,inner sep=2pt,minimum size=4mm]
\tikzstyle{grnode}=[circle,draw,fill=green!10,thick,inner sep=2pt,minimum size=4mm]
\tikzstyle{ynode}=[rectangle,draw=black,fill=yellow!10,thick,inner sep=3pt,minimum size=4mm]
\tikzstyle{gnode}=[rectangle,fill=green!10,thick,inner sep=1pt,minimum size=4mm]
\tikzstyle{bnode}=[rectangle,fill=red!10,thick,inner sep=1pt,minimum size=4mm]
\newcommand{\bone}{\ensuremath{\mathbf{1}}}
\newcommand{\bigO}{\mathcal{O}}
\newcommand{\wDNNF}{\textsf{wDNNF}}
\newcommand{\ol}[1]{\ensuremath{\overline{{#1}}}}
\newcommand{\nnf}[1]{\ensuremath{\widehat{{#1}}}}
\newcommand{\proj}[2]{\ensuremath{{#1}\!\!\downarrow\!\!{\small{#2}}}}
\newcommand{\suprt}[1]{\ensuremath{\mathsf{sup}({#1})}}
\begin{document}

\mainmatter

\setlength{\pdfpageheight}{\paperheight}
\setlength{\pdfpagewidth}{\paperwidth}

\title{What's hard about Boolean Functional Synthesis?}
\author{S. Akshay\and Supratik Chakraborty \and Shubham Goel \and \\ Sumith Kulal\and Shetal Shah}

\institute{Indian Institute of Technology Bombay, India}

\maketitle

\begin{abstract}
  Given a relational specification between Boolean inputs and outputs,
  the goal of Boolean functional synthesis is to synthesize each
  output as a function of the inputs such that the specification is
  met. In this paper, we first show that unless some hard conjectures
  in complexity theory are falsified, Boolean functional synthesis
  must generate large Skolem functions in the worst-case.
  Given this inherent hardness, what does one do to
  solve the problem?  We present a two-phase algorithm,
  where the first phase is efficient both in
  terms of time and size of synthesized functions, and solves a large
  fraction of benchmarks. To explain this surprisingly good
  performance, we provide a sufficient condition under which the first
  phase must produce correct answers. When this condition fails,
  the second phase builds upon the result of the first phase, possibly
  requiring exponential time and generating exponential-sized
  functions in the worst-case.
  Detailed experimental evaluation shows
  our algorithm to perform better than other techniques for
  a large number of benchmarks.
\end{abstract}

\keywords
Skolem functions, synthesis, SAT solvers, CEGAR based approach

\section{Introduction}
\label{sec:intro}
The algorithmic synthesis of Boolean functions satisfying relational
specifications has long been of interest to logicians and computer
scientists.  Informally, given a Boolean relation between input and
outupt variables denoting the specification, our goal is to synthesize
each output as a function of the inputs such that the relational
specification is satisfied.  Such functions have also been
called \emph{Skolem functions} in the literature~\cite{fmcad2015:skolem,bierre}.  Boole~\cite{boole1847} and Lowenheim~\cite{lowenheim1910} studied variants of this problem in the context of finding most general unifiers.  While these studies are theoretically elegant, implementations of the underlying techniques
have been found to scale poorly beyond small problem
instances~\cite{macii1998}.  More recently, synthesis of Boolean
functions has found important applications in a wide range of contexts
including reactive strategy
synthesis~\cite{alur2005,syntcomp15,Zhu17}, certified QBF-SAT
solving~\cite{jiang2,Rabe15,Balabanov12,Niemetz12}, automated program
synthesis~\cite{Gulwani,SoLe13}, circuit repair and
debugging~\cite{Jo12}, disjunctive decomposition of symbolic
transition relations~\cite{Trivedi} and the like.  This has spurred
recent interest in developing practically efficient Boolean function
synthesis algorithms.  The resulting new generation of
tools~\cite{bierre,fmcad2015:skolem,tacas2017,rsynth,rsynth:fmcad2017,Rabe15,Rabe16}
have enabled synthesis of Boolean functions from much larger and more
complex relational specifications than those that could be handled by
earlier techniques, viz.~\cite{Jian,jiang2,macii1998}.  

In this paper, we re-examine the Boolean functional synthesis problem
from both theoretical and practical perspectives.  Our investigation
shows that unless some hard conjectures in complexity theory are
falsified, Boolean functional synthesis must necessarily generate
super-polynomial or even exponential-sized Skolem functions, thereby requiring
super-polynomial or exponential time, in the worst-case.  Therefore, it is unlikely
that an efficient algorithm exists for solving all instances of
Boolean functional synthesis.  There are two ways to address this
hardness in practice: (i) design algorithms that are provably
efficient but may give ``approximate'' Skolem functions that are
correct only on a fraction of all possible input assignments, or
(ii) design a phased algorithm, wherein the initial phase(s) is/are
provably efficient and solve a subset of problem instances, and
subsequent phase(s) have worst-case exponential behaviour and solve
all remaining problem instances.  In this paper, we combine the two
approaches while giving heavy emphasis on efficient instances. We
also provide a sufficient condition for our algorithm to be
efficient, which indeed is borne out by our experiments.

The primary contributions of this paper can be summarized as follows.
\begin{enumerate}
\item 
We start by showing that 
unless {\myP} = {\NP}, there exist problem instances where Boolean functional synthesis must take super-polynomial time. Also, unless the Polynomial Hierarchy collapses to the second level, there must exist problem instances where Boolean functional synthesis must generate super polynomial sized Skolem functions.  Moreover, if the non-uniform exponential time hypothesis~\cite{Chen12}
holds, there exist problem instances where Boolean functional
synthesis must generate exponential sized Skolem functions, 
thereby also requiring at least exponential time.
\item We present a new two-phase algorithm for Boolean functional synthesis.
\begin{enumerate}
\item Phase 1 of our algorithm generates candidate Skolem functions
of size polynomial in the input specification. This phase makes
polynomially many calls to an {\NP} oracle (SAT solver in practice).
Hence it directly benefits from the progess made by the SAT solving
community, and is efficient in practice.  Our experiments indicate
that Phase 1 suffices to solve a large majority of publicly available
benchmarks.
\item However, there are indeed cases where the first phase is not
enough (our theoretical results imply that such cases likely exist).
In such cases, the first phase provides good candidate Skolem functions as
starting points for the second phase. Phase 2 of our algorithm starts
from these candidate Skolem functions, and uses a CEGAR-based approach to
produce correct Skolem functions whose size may indeed be exponential in
the input specification.

\end{enumerate}
\item We analyze the surprisingly good performance of the first phase (especially in light of the
theoretical hardness results) and show a sufficient condition on the
structure of the input representation that guarantees correctness of
the first phase.  Interestingly, popular representations like ROBDDs~\cite{bryant1986}
give rise to input structures that satisfy this condition. The
goodness of Skolem functions generated in this phase of the algorithm
can also be quantified with high confidence by invoking an approximate
model counter~\cite{approxmc}, whose complexity lies in
${\BPP}^{\NP}$.
\item We conduct an extensive set of experiments over a variety of benchmarks, and show that
our algorithm performs favourably vis-a-vis state-of-the-art algorithms for Boolean functional synthesis. 
\end{enumerate}
\paragraph{\bfseries Related work}
The literature contains several early theoretical studies on variants
of Boolean functional
synthesis~\cite{boole1847,lowenheim1910,deschamps1972,boudet1989,martin1989,baader1999}. More
recently, researchers have tried to build practically efficient
synthesis tools that scale to medium or large problem
instances. In~\cite{bierre}, Skolem functions for $\bX$ are extracted
from a proof of validity of $\forall \bY \exists\bX\, F(\bX,\bY)$.
Unfortunately, this doesn't work when $\forall \bY \exists\bX\,
F(\bX,\bY)$ is not valid, despite this class of problems being
important, as discussed in~\cite{rsynth,tacas2017}. Inspired by the
spectacular effectiveness of CDCL-based SAT solvers, an incremental
determinization technique for Skolem function synthesis was proposed
in~\cite{Rabe16}.  
In~\cite{Jian,Trivedi}, a synthesis approach based on
iterated compositions was proposed.  Unfortunately, as has been noted
in~\cite{fmcad2015:skolem,rsynth}, this does not scale to large
benchmarks.  A recent work~\cite{rsynth} adapts the
composition-based approach to work with ROBDDs.  For
factored specifications, ideas from symbolic model checking using
implicitly conjoined ROBDDs have been used to enhance the
scalability of the technique further in~\cite{rsynth:fmcad2017}. 
In the genre of CEGAR-based techniques, \cite{fmcad2015:skolem} showed
how CEGAR can be used to synthesize Skolem functions from factored
specifications.  Subsequently, a compositional and parallel technique
for Skolem function synthesis from arbitrary specifications
represented using AIGs was presented in~\cite{tacas2017}. The second
phase of our algorithm builds on some of this work. 
In addition to the above techniques, template-based~\cite{Gulwani} or
sketch-based~\cite{SRBE05} approaches have been found to be effective
for synthesis when we have information about the set of candidate
solutions. A framework for functional synthesis that reasons about
some unbounded domains such as integer arithmetic, was proposed
in~\cite{KMPS10}.

\section{Notations and Problem Statement}
\label{sec:prelim}
A Boolean formula $F(z_1, \ldots z_p)$ on $p$ variables is a mapping
$F: \{0, 1\}^p \rightarrow \{0,1\}$.  The set of variables $\{z_1,
\ldots z_p\}$ is called the \emph{support} of the formula, and denoted
$\suprt{F}$.  A \emph{literal} is either a variable or its
complement. We use $F|_{z_i = 0}$ (resp. $F|_{z_i = 1}$) to denote the
positive (resp. negative) cofactor of $F$ with respect to $z_i$.
A \emph{satisfying assignment} or \emph{model} of $F$ is a mapping of
variables in $\suprt{F}$ to $\{0,1\}$ such that $F$ evaluates to $1$
under this assignment.  
If $\pi$ is a model of $F$, we write
$\pi
\models F$ and use $\pi(z_i)$ to denote the value assigned to $z_i \in
\suprt{F}$ by $\pi$.  Let $\bZ = (z_{i_1}, z_{i_2}, \ldots z_{i_j})$
be a sequence of variables in $\suprt{F}$.  We use $\proj{\pi}{\bZ}$
to denote the projection of $\pi$ on $\bZ$, i.e. the sequence
$(\pi(z_{i_1}), \pi(z_{i_2}), \ldots \pi(z_{i_j}))$.  

A Boolean formula is in \emph{negation normal form (NNF)} if (i) the
only operators used in the formula are conjunction ($\wedge$),
disjunction ($\vee$) and negation ($\neg$), and (ii) negation is
applied only to variables. Every Boolean formula can be converted to a
semantically equivalent formula in NNF.  We assume an NNF formula is
represented by a rooted directed acyclic graph (DAG), where nodes are
labeled by $\wedge$ and $\vee$, and leaves are labeled by literals.
In this paper, we use AIGs~\cite{aiger} as the initial
representation of specifications.  Given an AIG with $t$ nodes, an
equivalent NNF formula of size $\bigO(t)$ can be constructed in
$\bigO(t)$ time.  We use $|F|$ to denote the number of nodes in a DAG
represention of $F$.

Let $\alpha$ be the subformula represented by an internal node $N$
(labeled by $\wedge$ or $\vee$) in a DAG representation of an NNF
formula.  We use $\lits{\alpha}$ to denote the set of literals
labeling leaves that have a path to the node $N$ representing $\alpha$
in the AIG.
A formula is said to be in \emph{weak decomposable NNF}, or {\wDNNF},
if it is in NNF and if for every $\wedge$-labeled internal node in the
AIG, the following holds: let $\alpha
= \alpha_1 \wedge \ldots \wedge \alpha_k$ be the subformula represented by
the internal node. Then, there is no literal $l$ and distinct indices
$i, j \in \{1, \ldots k\}$ such that $l \in \lits{\alpha_i}$ and $\neg
l \in \lits{\alpha_j}$.  Note that {\wDNNF} is a weaker structural
requirement on the NNF representation vis-a-vis the well-studied
{\DNNF} representation, which has elegant
properties~\cite{darwiche-jacm}.  Specifically, every {\DNNF} formula
is also a {\wDNNF} formula.

We say a \emph{literal} $l$ is
\emph{pure} in $F$ iff the NNF representation of $F$ has a leaf labeled $l$, but no leaf
labeled $\neg l$.  $F$ is said to be \emph{positive unate} in
$z_i \in \suprt{F}$ iff $F|_{z_i = 0} \Rightarrow F|_{z_i = 1}$.
Similarly, $F$ is said to be
\emph{negative unate} in $z_i$ iff $F|_{z_i =
1} \Rightarrow F|_{z_i = 0}$.  Finally, $F$ is \emph{unate} in $z_i$
if $F$ is either positive unate or negative unate in $z_i$. A function
that is not unate in $z_i \in \suprt{F}$ is said to be \emph{binate}
in $z_i$. 

We also use $\bX = (x_1, \ldots x_n)$ to denote a sequence of Boolean
outputs, and $\bY = (y_1, \ldots y_m)$ to denote a sequence of Boolean
inputs.  The \emph{Boolean functional synthesis} problem, henceforth
denoted {\BFS}, asks: given a Boolean formula $F(\bX, \bY)$ specifying
a relation between inputs $\bY = (y_1, \ldots y_m)$ and outputs $\bX =
(x_1, \ldots x_n)$, determine functions $\bpsi =
(\psi_1(\bY), \ldots \psi_n(\bY))$ such that $F(\bpsi, \bY)$ holds
whenever $\exists \bX F(\bX, \bY)$ holds. Thus,
$\forall \bY \exists \bX\, \left(F(\bX, \bY)\right.\Leftrightarrow \left.F(\bpsi, \bY)\right)$
must be rendered valid.  The function $\psi_i$ is called
a \emph{Skolem function} for $x_i$ in $F$, and $\bpsi =
(\psi_1,\ldots \psi_n)$ is called a \emph{Skolem function vector} for
$\bX$ in $F$.

For $1 \le i \le j \le n$, let $\bX_{i}^j$ denote the subsequence
$(x_i, x_{i+1}, \ldots x_j)$ and let $F^{(i-1)}(\bX_i^n, \bY)$ denote
$\exists
\bX_1^{i-1} F(\bX_1^{i-1}, \bX_i^n, \bY)$. It has been argued
in~\cite{fmcad2015:skolem,rsynth,tacas2017,Jian} that given a
relational specification $F(\bX, \bY)$, the {\BFS}
problem can be solved by first ordering the outputs, say as
$x_1 \prec x_2 \cdots \prec x_n$, and then synthesizing a function
$\psi_i(\bX_{i+1}^n, \bY)$ for each $x_i$ such that $F^{(i-1)}(\psi_i,
\bX_{i+1}^n, \bY) \Leftrightarrow \exists x_i F^{(i-1)}(x_i,
\bX_{i+1}^n, \bY)$.  Once all such $\psi_i$ are obtained, one can
substitute $\psi_{i+1}$ through $\psi_{n}$ for $x_{i+1}$ through $x_n$
respectively, in $\psi_i$ to obtain a Skolem function for $x_i$ as a
function of only $\bY$.  We adopt this approach, and
therefore focus on obtaining $\psi_i$ in terms of $\bX_{i+1}^n$ and
$\bY$.  Furthermore, we know
from~\cite{fmcad2015:skolem,Jian} that a function $\psi_i$
is a Skolem function for $x_i$ iff it satisfies
$\cba{i}^{F} \Rightarrow \psi_i \Rightarrow \neg \cbb{i}^{F}$, where
$\cba{i}^{F}
\equiv$ $\neg \exists \bX_1^{i-1}\,$ $F(\bX_1^{i-1}, 0, \bX_{i+1}^n,
\bY)$, and $\cbb{i}^{F} \equiv$ $\neg \exists \bX_1^{i-1} F(\bX_1^{i-1},
1, \bX_{i+1}^n, \bY)$.  When $F$ is clear from the context, we often
omit it and write $\cba{i}$ and $\cbb{i}$.  It is easy to see that
both $\cba{i}$ and $\neg\cbb{i}$ serve as Skolem
functions for $x_i$ in $F$.

\section{Complexity-theoretical limits}
\label{sec:eth}
\newcommand{\Pitoop}{$\Pi^P_2$}
\newcommand{\ETH}{\textsf{ETH}}
\newcommand{\ETHnu}{\mathsf{ETH_{nu}}}
\newcommand{\EXPT}{\textsf{EXPTIME}}

In this section, we investigate the computational complexity of {\BFS}.
It is 
easy to see that {\BFS} can be solved in \EXPT. Indeed a
naive solution would be to enumerate all possible values of inputs
$\bY$ and invoke a SAT solver to find values of $\bX$ corresponding to
each valuation of $\bY$ that makes $F(\bX, \bY)$ true.  This requires
worst-case time exponential in the number of inputs and outputs, and may produce an
exponential-sized circuit.  Given this 
one can ask if we can
develop a better algorithm that works faster and synthesizes ``small''
Skolem functions in all cases? Our first result shows that existence
of such small Skolem functions would violate hard complexity-theoretic
conjectures.

\begin{theorem}
  \label{thm:hardness}
  \begin{enumerate}
  \item Unless $\myP=\NP$, there exist problem instances where any
    algorithm for {\BFS} must take super-polynomial
    time.
\item Unless $\Sigmatwop=\Pitwop$, there exist problem instances where \BFS\ must generate Skolem functions of size super-polynomial in the input size.
  \item Unless the \emph{non-uniform exponential-time hypothesis (or
    $\ETHnu$)} fails, there exist problem instances where any algorithm for {\BFS} must generate Skolem functions of size exponential in the input size.
  \end{enumerate}
\end{theorem}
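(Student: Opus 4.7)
The plan is to handle each of the three parts with a different reduction, all resting on the same observation: a Skolem function vector must produce a witness for $\exists \bX F(\bX, \bY)$ whenever one exists, so computing or even writing one down is at least as hard as deciding, or witnessing, appropriate statements in the polynomial hierarchy.

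For part~(1), I would reduce SAT directly to \BFS. Given a CNF $\varphi(\bZ)$, take the BFS instance with $\bX = \bZ$, $\bY$ empty, and $F(\bX) = \varphi(\bX)$. A Skolem function vector here degenerates to a constant assignment $\bpsi \in \{0,1\}^{|\bX|}$ which, by the Skolem property $\exists \bX\, F(\bX) \Rightarrow F(\bpsi)$, must satisfy $\varphi$ whenever $\varphi$ is satisfiable. A polynomial-time \BFS\ algorithm would therefore decide SAT in polynomial time --- run the algorithm, then evaluate $F(\bpsi)$ --- contradicting $\myP \ne \NP$.

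For part~(2), I would reduce from an arbitrary $\Pitwop$ sentence $\forall \bY \exists \bX\, F(\bX,\bY)$, using $F(\bX,\bY)$ itself as the BFS specification. This sentence is true iff some circuit $\bpsi(\bY)$ satisfies $\forall \bY\, F(\bpsi(\bY),\bY)$; the forward direction uses the defining Skolem property combined with the fact that $\exists \bX F(\bX,\bY)$ holds for every $\bY$, and the reverse direction is immediate. Under the hypothesis that \BFS\ always admits polynomial-size Skolem functions, every true sentence of this shape witnesses itself through such a polynomial-size $\bpsi$, so the condition becomes $\Sigmatwop$: guess a polynomial-size circuit for $\bpsi$, then check $\forall \bY\, F(\bpsi(\bY),\bY)$ with a $\coNP$ oracle. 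This yields $\Pitwop \subseteq \Sigmatwop$ and collapses the polynomial hierarchy to its second level.

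For part~(3), I would design a single BFS instance whose Skolem function doubles as a generic 3-SAT solver. Let $\bY$ encode an arbitrary 3-CNF $\varphi_{\bY}$ over a fixed vector $\bX$, and let $F(\bX,\bY)$ be the polynomial-size formula asserting that $\bX$ satisfies $\varphi_{\bY}$. Any Skolem function $\bpsi(\bY)$ for $\bX$ is then, by definition, a circuit that on input $\bY$ outputs a satisfying assignment to $\varphi_{\bY}$ whenever one exists. If $|\bpsi|$ were $2^{o(n)}$ with $n = |F|$, composing $\bpsi$ with a polynomial-time satisfaction check would yield a non-uniform family of circuits of size $2^{o(n)}$ deciding 3-SAT, violating $\ETHnu$; the matching time bound is immediate since writing out an $s$-sized circuit requires at least $s$ steps. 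The main obstacle here is engineering $F(\bX,\bY)$ so that its size stays polynomial in the size of the encoded 3-CNF, and arguing that $\bpsi$'s unconstrained behaviour on encodings of unsatisfiable formulas does not derail the reduction (the final satisfaction check cleanly separates genuine witnesses from garbage); parts~(1) and~(2) reduce essentially in one line, with the real content lying in calibrating the verification stage so that containment lands in exactly $\myP$, $\Sigmatwop$, and non-uniform sub-exponential circuits respectively.
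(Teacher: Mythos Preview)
Your proposal is correct but organized differently from the paper. The paper builds a \emph{single} reduction from the parametrized clique problem: $\bY$ encodes the adjacency matrix of a graph together with a target size $k$, $\bX$ selects a vertex subset, and $F(\bX,\bY)$ checks that the selected subset is a $k$-clique. All three parts then follow from this one construction: a polynomial-time {\BFS} algorithm would decide clique in polynomial time (part~1); polynomial-size Skolem circuits would place $\NP$ in $\Ppoly$, and the Karp--Lipton theorem then collapses $\PH$ to level~2 (part~2); sub-exponential Skolem circuits would give sub-exponential non-uniform algorithms for $p$-Clique, which violates $\ETHnu$ via a result of Chen et al.\ (part~3). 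By contrast, you use three separate reductions tailored to each part. Your part~(1) is slightly simpler (SAT with empty $\bY$ rather than clique). Your part~(2) is actually more direct than the paper's: you bypass Karp--Lipton entirely by observing that a polynomial size bound on Skolem functions lets one guess the Skolem circuit and universally verify it, placing $\Pitwop$ inside $\Sigmatwop$ in one step. Your part~(3) is the same idea as the paper's, instantiated with 3-SAT instead of clique; the paper's route through $p$-Clique and the cited transfer result has the minor advantage of not needing to control the ratio between $|F|$ and the number of 3-SAT variables (the obstacle you flag), though both arguments are at a comparable level of informality on that point. In short: the paper buys uniformity of construction, while your decomposition buys a cleaner argument for part~(2).
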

The assumption in the first statement implies that the Polynomial Hierarchy (\PH) collapses completely (to level 1), while the second implies that \PH\ collapses to level 2. A consequence of the third statement is that, under this hypothesis, there must exist an instance of {\BFS} for which any algorithm must take \EXPT\ time. 

The exponential-time hypothesis
$\ETH$ and its strengthened version, the non-uniform exponential-time
hypothesis $\ETHnu$ are unproven computational hardness assumptions
(see~\cite{IP01},\cite{Chen12}), which have been used to show that
several classical decision, functional and parametrized NP-complete
problems (such as clique) are unlikely to have sub-exponential
algorithms. 
$\ETHnu$ states that there is no family of algorithms
(one for each family of inputs of size $n$) that can solve 3-SAT in
subexponential time. In~\cite{Chen12} it is shown that if $\ETHnu$
holds, then \emph{$p$-Clique, the parametrized clique problem}, cannot
be solved in sub-exponential time, i.e., for all $d\in \mathbb{N}$,
and sufficiently large fixed $k$, determining whether a graph $G$ has
a clique of size $k$ cannot be done in \DTIME $(n^d)$.

\begin{proof}
  We describe a reduction from $p$-Clique to {\BFS}.
  Given an undirected graph $G=(V,E)$ on $n$-vertices and a number $k$
  (encoded in binary), we want to check if $G$ has a clique of size
  $k$. We encode the graph as follows: each vertex $v \in V$ is
  identified by a unique number in $\{1, \ldots n\}$, and for every
  $(i, j)\in V \times V$, we introduce an input variable $y_{i,j}$
  that is set to $1$ iff $(i, j) \in E$.  We call the resulting vector
  of input variables $\vec{y}$. We also have additional input
  variables $\vec{z}=z_1,\ldots z_m$, which represent the binary
  encoding of $k$ ($m= \lceil\log_2 k\rceil$). Finally, we introduce
  output variables $x_v$ for each $v \in V$, whose values 
  determine which vertices are present in the clique.  Let
  $\vec{x}$ denote the vector of $x_v$ variables.

Given inputs $\bY=\{\vec{y},\vec{z}\}$, and outputs $\bX=\{\vec{x}\}$,
our specification is represented by a circuit $F$ over $\bX,\bY$ that
verifies whether the vertices encoded by $\bX$ indeed form a
$k$-clique of the graph $G$.  The circuit $F$ is constructed as
follows:
\begin{enumerate}
\item For every $i, j$ such that $1 \le i < j \le n$, we construct a
  sub-circuit implementing $x_i \wedge x_j \Rightarrow y_{i,j}$. The
  outputs of all such subcircuits are conjoined to give an
  intermediate output, say $\mathsf{EdgesOK}$.  Clearly, all the
  subcircuits taken together have size $\bigO(n^2)$.
\item We have a tree of binary adders implementing $x_1 + x_2 + \ldots
  x_n$.  Let the $\lceil \log_2 n \rceil$-bit output of the adder be
  denoted $\mathsf{CliqueSz}$.  The size of this adder is clearly
  $\bigO(n)$.
\item We have an equality checker that checks if $\mathsf{CliqueSz} =
  k$. Clearly, this subcircuit has size $\lceil \log_2 n \rceil$.
  Let the output of this equality checker be called $\mathsf{SizeOK}$.
\item The output of the specification circuit $F$ is $\mathsf{EdgesOK}
  \wedge \mathsf{SizeOK}$.
\end{enumerate}

Given an instance $\bY=\{\vec{y},\vec{z}\}$ of $p$-Clique, we now
consider the specification $F(\bX,\bY)$ as constructed above and feed
it as input to any algorithm {\AAA} for solving {\BFS}.  Let $\bpsi$ be
the Skolem function vector output by {\AAA}.  For each $i \in \{1,
\ldots n\}$, we now feed $\psi_i$ to the input $y_i$ of the circuit
$F$.  This effectively constructs a circuit for $F(\bpsi,\bY)$.  It is
easy to see from the definition of Skolem functions that for every
valuation of $\bY$, the function $F(\bpsi, \bY)$ evaluates to $1$ iff
the graph encoded by $\bY$ contains a clique of size $k$.

Using this reduction, we can complete the proofs of our statements:
\begin{enumerate}
\item If the circuits for the Skolem functions $\bpsi$ are super-polynomial size, then of course any algorithm generating $\bpsi$ must take super-polynomial time. On the other hand, if the circuits for the Skolem functions $\bpsi$ are always poly-sized, then $F(\bpsi, \bY)$ is polynomial-sized, and evaluating
  it takes time that is polynomial in the input size.  Thus, if {\AAA} is a polynomial-time algorithm, we also get an algorithm for solving $p$-Clique in polynomial time, which  implies that $\myP=\NP$.
\item  If the circuits for the Skolem functions $\bpsi$ produced by algorithm {\AAA} are always polynomial-sized, then $F(\bpsi, \bY)$ is polynomial-sized. Thus,with polynomial-sized circuits we are able to solve $p$-Clique. Recall that problems that can be solved using polynomial-sized circuits are said to be in the class \PSIZE\ (equivalently called \Ppoly). But since $p$-Clique is an \NP-complete problem, we obtain that $\NP\subseteq \Ppoly$. By the Karp-Lipton Theorem~\cite{KL82}, this implies that $\Sigmatwop=\Pitwop$, which implies that \PH\ collapses to level 2. 
\item If the circuits for the Skolem functions $\bpsi$ are
  sub-exponential sized in the input $n$, then $F(\bpsi, \bY)$ is also
  sub-exponential sized and can be evaluated in sub-exponential time.
  It then follows that we can solve any instance p-Clique of input
  length $n$ in sub-exponential time -- a violation of $\ETHnu$. Note
  that since our circuits can change for different input lengths, we
  may have different algorithms for different $n$. Hence we have to
  appeal to the non-uniform variant of $\ETH$.\qed
\end{enumerate}
\end{proof}

Theorem~\ref{thm:hardness} implies that efficient algorithms for
{\BFS} are unlikely.  We therefore propose a two-phase algorithm to
solve {\BFS} in practice.  The first phase runs in polynomial time
relative to an {\NP}-oracle and generates polynomial-sized
``approximate'' Skolem functions.  We show that under certain
structural restrictions on the NNF representation of $F$, the first
phase always returns exact Skolem functions.  However, these
structural restrictions may not always be met.  An {\NP}-oracle can be
used to check if the functions computed by the first phase are indeed
exact Skolem functions.  In case they aren't, we proceed to the second
phase of our algorithm that runs in worst-case exponential time.
Below, we discuss the first phase in detail.  The second phase is an
adaptation of an existing CEGAR-based technique and is described
briefly later.

\section{Phase 1: Efficient polynomial-sized synthesis}
\label{sec:nnf}

An easy consequence of the definition of unateness is the following.
\begin{proposition}
  \label{prop:unate}
If $F(\bX,\bY)$ is positive (resp. negative) unate in $x_i$, then
$\psi_i = 1$ (resp. $\psi_i = 0$) is a correct Skolem function for
$x_i$.
\end{proposition}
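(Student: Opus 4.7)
The plan is to verify that $\psi_i = 1$ meets the per-variable Skolem obligation $F^{(i-1)}(\psi_i, \bX_{i+1}^n, \bY) \Leftrightarrow \exists x_i\, F^{(i-1)}(x_i, \bX_{i+1}^n, \bY)$ recalled in Section~2 whenever $F$ is positive unate in $x_i$; the negative-unate case will then follow by a symmetric argument. The $\Rightarrow$ direction of the biconditional is automatic (the RHS is the existential closure of the LHS in $x_i$), so only the $\Leftarrow$ direction requires attention.

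Substituting $\psi_i = 1$, the LHS becomes $\exists \bX_1^{i-1}\, F(\bX_1^{i-1}, 1, \bX_{i+1}^n, \bY)$, which already appears as one of the two disjuncts on the RHS after splitting $\exists x_i$ on its two values. The obligation therefore reduces to the single implication
\[
\exists \bX_1^{i-1}\, F(\bX_1^{i-1}, 0, \bX_{i+1}^n, \bY) \;\Longrightarrow\; \exists \bX_1^{i-1}\, F(\bX_1^{i-1}, 1, \bX_{i+1}^n, \bY),
\]
required for every valuation of $\bX_{i+1}^n$ and $\bY$. Equivalently, in the $\cba{i}^F / \cbb{i}^F$ vocabulary recalled just before the proposition, this is precisely $\cbb{i}^F \Rightarrow \cba{i}^F$.

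This final implication is discharged in a single stroke by positive unateness. The hypothesis $F|_{x_i=0} \Rightarrow F|_{x_i=1}$ is a propositional implication valid at every assignment to the remaining variables, so any $\bX_1^{i-1}$ that witnesses the antecedent $F(\bX_1^{i-1}, 0, \bX_{i+1}^n, \bY)$ automatically witnesses the consequent $F(\bX_1^{i-1}, 1, \bX_{i+1}^n, \bY)$ for the same choice of $\bX_{i+1}^n, \bY$. The negative-unate case is handled by the mirror argument: with $\psi_i = 0$, the analogous obligation $\exists \bX_1^{i-1}\, F(\bX_1^{i-1}, 1, \bX_{i+1}^n, \bY) \Rightarrow \exists \bX_1^{i-1}\, F(\bX_1^{i-1}, 0, \bX_{i+1}^n, \bY)$ is closed by the same witness-preservation step, now invoking $F|_{x_i=1} \Rightarrow F|_{x_i=0}$.

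I do not anticipate any real obstacle: the proof is essentially one witness-level use of unateness together with routine definitional unfolding. The only point to watch is keeping the direction of the unate implication aligned with the chosen constant ($\psi_i = 1$ for positive unateness, $\psi_i = 0$ for negative unateness), which matches the monotonicity intuition that under positive unateness one can always \emph{push $x_i$ up} to $1$ without losing any satisfying assignment of the remaining variables.
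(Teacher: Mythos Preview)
Your proposal is correct and follows essentially the same approach as the paper: both arguments reduce the Skolem obligation to showing $\exists x_i\,F \Leftrightarrow F|_{x_i=1}$ (equivalently, that the disjunct $F|_{x_i=0}$ is absorbed by $F|_{x_i=1}$ under positive unateness), with the negative case symmetric. Your version merely unfolds the $F^{(i-1)}$ framework and carries the $\exists \bX_1^{i-1}$ quantifier explicitly, whereas the paper works directly with $F$ and leaves that lifting implicit; the underlying logical step is identical.
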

\begin{proof}
Recall that $F$ is positive unate in $x_i$ means that  $F|_{x_i=0}\implies F|_{x_i=1}$. We start by observing that $\exists x_i F=F|_{x_i=0}\vee F|_{x_i=1}\implies F|_{x_i=1}$. Conversely, $F|_{x_i=1} \implies \exists x_i F$. Hence, we conclude that $1$ is indeed a correct Skolem function for $x_i$ in $F$. The proof for negative unateness follows on the same lines.\qed
\end{proof}
The above result gives us a way to identify outputs $x_i$ for which a Skolem function can be easily computed. Note
that if $x_i$ (resp. $\neg x_i$) is a pure literal in $F$, then $F$ is
positive (resp. negative) unate in $x_i$.  However, the converse is
not necessarily true.  In general, a semantic check is necessary for
unateness. In fact, it follows from the definition of unateness that
$F$ is positive (resp. negative) unate in $x_i$, iff the formula
$\eta_i^+$ (resp. $\eta_i^-$) defined below is unsatisfiable.
\begin{align}
  \eta_i^+ &= F(\bX_{1}^{i-1},0,\bX_{i+1}^n, \bY) \wedge \neg
  F(\bX_{1}^{i-1},1,\bX_{i+1}^n, \bY). \label{eqn:eta_i_plus}\\
  \eta_i^- &= F(\bX_{1}^{i-1},1,\bX_{i+1}^n, \bY) \wedge \neg
  F(\bX_{1}^{i-1},0,\bX_{i+1}^n, \bY). \label{eqn:eta_i_minus}
\end{align}
Note that each such check involves a single invocation of an NP-oracle,
and a variant of this method is described in~\cite{ABCH02}.

If $F$ is binate in an output $x_i$, Proposition~\ref{prop:unate}
doesn't help in synthesizing $\psi_i$.  Towards synthesizing Skolem
functions for such outputs, recall the definitions of $\cba{i}$ and
$\cbb{i}$ from Section~\ref{sec:prelim}.  Clearly, if we can compute
these functions, we can solve {\BFS}.  While computing $\cba{i}$ and
$\cbb{i}$ \emph{exactly} for all $x_i$ is unlikely to be efficient in
general (in light of Theorem~\ref{thm:hardness}), we show that
polynomial-sized ``good'' approximations of $\cba{i}$ and $\cbb{i}$
can be computed efficiently. As our experiments show, these approximations are good enough to solve {\BFS} for several benchmarks. Further, with an access to an {\NP}-oracle, we can also check when these approximations are indeed good enough.

Given a relational specification $F(\bX, \bY)$, we use $\nnf{F}(\bX,
\ol{\bX}, \bY)$ to denote the formula obtained by first converting $F$
to NNF, and then replacing every occurrence of $\neg x_i ~(x_i \in
\bX)$ in the NNF formula with a fresh variable $\ol{x_i}$.  As an
example, suppose $F(\bX, \bY) = (x_1 \vee \neg (x_2 \vee y_1)) \vee
\neg(x_2 \vee \neg(y_2 \wedge \neg y_1))$.  Then $\nnf{F}(\bX,
\ol{\bX}, \bY) = (x_1 \vee (\ol{x_2} \wedge \neg y_1)) \vee (\ol{x_2}
\wedge y_2 \wedge \neg y_1)$. Then, we have 
\begin{proposition}\label{prop:f-fnnf}
  \begin{enumerate}
  \item[(a)] $\nnf{F}(\bX, \ol{\bX}, \bY)$ is positive unate 
    in both $\bX$ and $\ol{\bX}$.
  \item[(b)] Let $\neg \bX$ denote $(\neg x_1, \ldots \neg x_n)$.
    Then $F(\bX, \bY) \Leftrightarrow \nnf{F}(\bX, \neg \bX, \bY)$.
  \end{enumerate}
\end{proposition}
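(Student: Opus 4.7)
\textbf{Proof proposal for Proposition \ref{prop:f-fnnf}.} The plan is to handle the two parts separately, each by a routine structural argument on the NNF DAG, with the only care needed being to keep the syntactic transformation (convert $F$ to NNF, then rename each $\neg x_i$ leaf to a fresh literal $\ol{x_i}$) clearly distinct from the semantic consequences it enables.

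For part (a), I would first observe the following syntactic invariant of $\nnf{F}$: every leaf of its DAG is labeled either by a $\bY$-literal (possibly negated), by $x_i$, or by $\ol{x_i}$, and in particular neither $\neg x_i$ nor $\neg \ol{x_i}$ ever appears. This is immediate from the construction, since NNF guarantees that negations sit only at variable leaves and the rewriting consumes exactly the $\neg x_i$ leaves. I would then prove the general lemma that any NNF formula $G$ in which a variable $z$ occurs only positively is positive unate in $z$. The proof is by structural induction on the DAG of $G$: for a leaf the claim is trivial (either $G$ is $z$ itself, in which case $G|_{z=0}=0\Rightarrow 1 = G|_{z=1}$, or $G$ does not mention $z$ at all); for an internal $\wedge$- or $\vee$-node, the cofactor commutes with the connective, and positive unateness of each child (by induction) combined with the monotonicity of $\wedge$ and $\vee$ yields positive unateness of the parent. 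Applied to $\nnf{F}$ with $z = x_i$ and then $z = \ol{x_i}$, this gives (a).

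For part (b), I would note that the substitution $\ol{x_i} \mapsto \neg x_i$ applied to $\nnf{F}$ is, at the level of syntax, the exact inverse of the renaming used to build $\nnf{F}$ from the NNF of $F$: it reintroduces $\neg x_i$ at precisely those leaves that were relabelled $\ol{x_i}$. Hence $\nnf{F}(\bX, \neg\bX, \bY)$ is syntactically identical to the NNF form of $F$, which in turn is semantically equivalent to $F(\bX,\bY)$ since conversion to NNF preserves meaning. This gives the required equivalence, completing (b). I do not expect a real obstacle in either part; the intended pay-off is that (a) and (b) together let later sections reason about $F$ through the monotone function $\nnf{F}$, in which the input-variables $x_i$ and their negations $\ol{x_i}$ are handled as independent monotonic directions.
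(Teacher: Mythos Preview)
The paper does not actually supply a proof of this proposition; it is stated as an immediate observation and used without further justification. Your proposal is correct and is exactly the routine structural argument one would expect: the syntactic invariant that $\nnf{F}$ contains neither $\neg x_i$ nor $\neg \ol{x_i}$ as leaves, combined with monotonicity of $\wedge$ and $\vee$, yields (a), and the observation that the substitution $\ol{x_i}\mapsto\neg x_i$ inverts the defining renaming yields (b).
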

For every $i \in \{1, \ldots n\}$, we can split $\bX = (x_1, \ldots
x_n)$ into two parts, $\bX_{1}^i$ and $\bX_{i+1}^n$, and represent
$\nnf{F}(\bX, \ol{\bX}, \bY)$ as $\nnf{F}(\bX_{1}^i,\bX_{i+1}^n,
\ol{\bX}_{1}^i, \ol{\bX}_{i+1}^n, \bY)$.
We use these representations of $\nnf{F}$ interchangeably, depending
on the context.  For $b, c \in \{0,1\}$, let $\mathbf{b}^{i}$
(resp. $\mathbf{c}^{i}$) denote a vector of $i$ $b$'s (resp. $c$'s).
For notational convenience, we use
$\nnf{F}(\mathbf{b}^i, \bX_{i+1}^n, \mathbf{c}^i, \ol{\bX}_{i+1}^n, \bY)$
to denote
$\nnf{F}(\bX_{1}^i, \bX_{i+1}^n, \ol{\bX}_{1}^i, \ol{\bX}_{i+1}^n, \bY)|_{\bX_{1}^i
= \mathbf{b}^i, \ol{\bX}_{1}^i = \mathbf{c}^i}$ in the subsequent
discussion.  The following is an easy consequence
of Proposition~\ref{prop:f-fnnf}.
\begin{proposition}\label{prop:exists-bounds}
  For every $i \in \{1, \ldots n\}$, the following holds:\\
   $\nnf{F}(\mathbf{0}^i, \bX_{i+1}^n, \mathbf{0}^i, \neg{\bX}_{i+1}^n, \bY)
   ~\Rightarrow~ \exists \bX_{1}^i F(\bX, \bY)
   ~\Rightarrow~ \nnf{F}(\mathbf{1}^i, \bX_{i+1}^n, \mathbf{1}^i, \neg{\bX}_{i+1}^n, \bY)$
\end{proposition}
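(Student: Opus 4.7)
The plan is to derive both implications directly from Proposition~\ref{prop:f-fnnf}: part~(b) lets us swap freely between $F$ and $\nnf{F}$ whenever the $\ol{x_j}$ arguments are set to $\neg x_j$, and part~(a) says that replacing any $0$ by $1$ among the $\bX$ or $\ol{\bX}$ inputs of $\nnf{F}$ can only turn the formula from false to true. So both implications boil down to picking a convenient witness for $\bX_1^i$ and monotonically nudging the cofactored arguments.

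For the right implication, I would start by unpacking $\exists \bX_1^i\, F(\bX,\bY)$: for each valuation of $\bX_{i+1}^n,\bY$ making it true, fix some witness $\mathbf{a}\in\{0,1\}^i$ with $F(\mathbf{a},\bX_{i+1}^n,\bY)=1$. By Proposition~\ref{prop:f-fnnf}(b) this is $\nnf{F}(\mathbf{a},\bX_{i+1}^n,\neg\mathbf{a},\neg\bX_{i+1}^n,\bY)=1$. Since $\nnf{F}$ is positive unate in every coordinate of $\bX_1^i$ and $\ol{\bX}_1^i$ by Proposition~\ref{prop:f-fnnf}(a), I can raise $\mathbf{a}$ to $\mathbf{1}^i$ and $\neg\mathbf{a}$ to $\mathbf{1}^i$ one bit at a time without losing truth, landing at $\nnf{F}(\mathbf{1}^i,\bX_{i+1}^n,\mathbf{1}^i,\neg\bX_{i+1}^n,\bY)=1$.

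For the left implication I would instead use the trivial witness $\bX_1^i=\mathbf{0}^i$. Assuming the antecedent $\nnf{F}(\mathbf{0}^i,\bX_{i+1}^n,\mathbf{0}^i,\neg\bX_{i+1}^n,\bY)=1$, positive unateness in the $\ol{\bX}_1^i$ coordinates lets me raise those zeros to ones, giving $\nnf{F}(\mathbf{0}^i,\bX_{i+1}^n,\mathbf{1}^i,\neg\bX_{i+1}^n,\bY)=1$. But $\mathbf{1}^i=\neg\mathbf{0}^i$, so Proposition~\ref{prop:f-fnnf}(b) now applies and this is exactly $F(\mathbf{0}^i,\bX_{i+1}^n,\bY)=1$, which certainly witnesses $\exists \bX_1^i\, F(\bX,\bY)$.

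The argument is essentially bookkeeping on top of the two parts of Proposition~\ref{prop:f-fnnf}, so there is no real obstacle; the only place to be careful is the asymmetric role of the $\ol{\bX}_1^i$ arguments in the two directions (raised from $\mathbf{0}^i$ to $\mathbf{1}^i$ on the left, and from $\neg\mathbf{a}$ to $\mathbf{1}^i$ on the right), and ensuring that every invocation of Proposition~\ref{prop:f-fnnf}(b) happens only at configurations where $\ol{\bX}_1^i$ is literally the negation of $\bX_1^i$.
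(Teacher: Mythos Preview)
Your proposal is correct and is precisely the argument the paper has in mind: the paper does not spell out a proof, stating only that the proposition is ``an easy consequence of Proposition~\ref{prop:f-fnnf},'' and your two-direction argument---using part~(b) to pass between $F$ and $\nnf{F}$ at points where $\ol{\bX}_1^i=\neg\bX_1^i$, and part~(a) to monotonically raise coordinates to $\mathbf{1}^i$---is exactly the intended unpacking of that remark.
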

Proposition~\ref{prop:exists-bounds} allows us to bound $\cba{i}$ and
$\cbb{i}$ as follows.
\begin{lemma}\label{lemma:delta-gamma-bounds}
  For every $x_i \in \bX$, we have:
  \begin{enumerate}
    \item[(a)] $\neg\nnf{F}(\mathbf{1}^{i-1}0, \bX_{i+1}^n,
      \mathbf{1}^i, \neg{\bX}_{i+1}^n, \bY) \Rightarrow \cba{i}
      \Rightarrow \neg\nnf{F}(\mathbf{0}^{i}, \bX_{i+1}^n,
      \mathbf{0}^{i-1}1, \neg{\bX}_{i+1}^n, \bY)$
    \item[(b)] $\neg\nnf{F}(\mathbf{1}^{i}, \bX_{i+1}^n,
      \mathbf{1}^{i-1}0, \neg{\bX}_{i+1}^n, \bY) \Rightarrow \cbb{i}
      \Rightarrow \neg\nnf{F}(\mathbf{0}^{i-1}1, \bX_{i+1}^n,
      \mathbf{0}^i, \neg{\bX}_{i+1}^n, \bY)$
  \end{enumerate}
\end{lemma}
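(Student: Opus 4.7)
The plan is to prove part (a); part (b) will follow by the same argument with the roles of the $i$-th coordinates of $\bX$ and $\ol{\bX}$ swapped. The two ingredients I will use throughout are Proposition~\ref{prop:f-fnnf}(b), which lets me rewrite $F(\bX,\bY)$ as $\nnf{F}(\bX,\neg\bX,\bY)$, and Proposition~\ref{prop:f-fnnf}(a), the positive unateness of $\nnf{F}$ in each coordinate of $\bX$ and each coordinate of $\ol{\bX}$, which says that raising any such coordinate from $0$ to $1$ can only make $\nnf{F}$ easier to satisfy. The overall strategy is to bracket $\exists \bX_1^{i-1}\, F(\bX_1^{i-1},0,\bX_{i+1}^n,\bY)$ between two specific instances of $\nnf{F}$ and take contrapositives to obtain the stated bounds on $\cba{i}=\neg\exists \bX_1^{i-1}\, F(\bX_1^{i-1},0,\bX_{i+1}^n,\bY)$.

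For the left implication of (a) I upper bound the existential. For an arbitrary witness $\mathbf{b}\in\{0,1\}^{i-1}$, Proposition~\ref{prop:f-fnnf}(b) gives $F(\mathbf{b},0,\bX_{i+1}^n,\bY)\Leftrightarrow \nnf{F}(\mathbf{b},0,\bX_{i+1}^n,\neg\mathbf{b},1,\neg\bX_{i+1}^n,\bY)$, and I apply positive unateness in each of the $i-1$ coordinates of $\bX_1^{i-1}$ and each of the $i-1$ coordinates of $\ol{\bX}_1^{i-1}$ to raise every one of those bits to $1$. The resulting formula $\nnf{F}(\mathbf{1}^{i-1}0,\bX_{i+1}^n,\mathbf{1}^i,\neg\bX_{i+1}^n,\bY)$ is independent of $\mathbf{b}$, so it dominates the existential. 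Its contrapositive is exactly $\neg\nnf{F}(\mathbf{1}^{i-1}0,\bX_{i+1}^n,\mathbf{1}^i,\neg\bX_{i+1}^n,\bY)\Rightarrow \cba{i}$.

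For the right implication of (a) I lower bound the existential by exhibiting a single witness and then strengthening. Choosing $\bX_1^{i-1}=\mathbf{0}^{i-1}$, Proposition~\ref{prop:f-fnnf}(b) gives $F(\mathbf{0}^{i-1},0,\bX_{i+1}^n,\bY)\Leftrightarrow \nnf{F}(\mathbf{0}^i,\bX_{i+1}^n,\mathbf{1}^i,\neg\bX_{i+1}^n,\bY)$. Positive unateness in the first $i-1$ coordinates of $\ol{\bX}$ says that \emph{lowering} these bits from $1$ to $0$ strengthens the formula, so $\nnf{F}(\mathbf{0}^i,\bX_{i+1}^n,\mathbf{0}^{i-1}1,\neg\bX_{i+1}^n,\bY)$ implies the right-hand side above, hence implies the existential. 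Taking the contrapositive gives $\cba{i}\Rightarrow \neg\nnf{F}(\mathbf{0}^i,\bX_{i+1}^n,\mathbf{0}^{i-1}1,\neg\bX_{i+1}^n,\bY)$. Part (b) follows by repeating the argument with $\bX_1^{i-1}=\mathbf{1}^{i-1}$ as the specific witness and with the $i$-th coordinate of $\bX$ pinned to $1$ rather than $0$. The main bookkeeping obstacle throughout is keeping the direction of unateness straight: because a contrapositive is taken at the end, an \emph{upper} bound on the existential yields the $\Rightarrow\cba{i}$ direction while a \emph{lower} bound yields the $\cba{i}\Rightarrow$ direction; and on the existentially quantified coordinates the pairing $\ol{x_j}=\neg x_j$ need not hold, which is precisely what leaves room to push the $\bX$-bits and $\ol{\bX}$-bits independently to $0$ or $1$.
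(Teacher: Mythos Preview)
Your proof is correct and follows essentially the same approach as the paper: the paper derives the lemma as a direct consequence of Proposition~\ref{prop:exists-bounds} (itself an easy corollary of Proposition~\ref{prop:f-fnnf}), and you are effectively unrolling that argument, bounding $\exists\bX_1^{i-1}F(\bX_1^{i-1},0,\bX_{i+1}^n,\bY)$ above and below via Proposition~\ref{prop:f-fnnf}(a),(b) and then taking contrapositives. The only cosmetic difference is that for the lower bound you route through the particular witness $\bX_1^{i-1}=\mathbf{0}^{i-1}$ and then monotone-lower the $\ol{\bX}$ bits, whereas the paper's Proposition~\ref{prop:exists-bounds} packages this as the single observation that $\nnf{F}(\mathbf{0}^{i-1},\ldots,\mathbf{0}^{i-1},\ldots)$ already sits below every instance $\nnf{F}(\mathbf{b},\ldots,\neg\mathbf{b},\ldots)$.
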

In the remainder of the paper, we only use under-approximations of
$\cba{i}$ and $\cbb{i}$, and use $\cbar{i}$ and $\cbbr{i}$
respectively, to denote them.
Recall from Section~\ref{sec:prelim} that both $\cba{i}$ and
$\neg\cbb{i}$ suffice as Skolem functions for $x_i$.
Therefore, we propose to use either
$\cbar{i}$ or $\neg \cbbr{i}$ (depending on which has a smaller AIG)
obtained from Lemma~\ref{lemma:delta-gamma-bounds} as
our approximation of $\psi_i$. Specifically, 
\begin{eqnarray}
  \cbar{i}
      &=& \neg\nnf{F}(\mathbf{1}^{i-1}0, \bX_{i+1}^n, \mathbf{1}^i, \neg{\bX}_{i+1}^n, \bY),
      ~ \cbbr{i}
      = \neg\nnf{F}(\mathbf{1}^{i}, \bX_{i+1}^n, \mathbf{1}^{i-1}0, \neg{\bX}_{i+1}^n, \bY) \nonumber\\
      \psi_{i}
      &=& \cbar{i} \mbox{ or } \neg \cbbr{i}, \mbox{
      depending on which has a smaller AIG}\label{eq:init_est}
\end{eqnarray}

\begin{example}\label{ex:rabe-seshia}
  Consider the specification $\bX = \bY$, expressed in NNF as $F(\bX, \bY)
  \equiv$ $\bigwedge_{i=1}^n \left((x_i \wedge y_i) \vee (\neg x_i
  \wedge \neg y_i)\right)$.  As noted in~\cite{Rabe16}, this
  is a difficult example for CEGAR-based QBF solvers,
  when $n$ is large.

  From Eqn~\ref{eq:init_est}, $\cbar{i}$ $=$ $\neg(\neg y_i
  ~\wedge~ \bigwedge_{j=i+1}^n (x_j \Leftrightarrow y_j))$ $=$ $y_i
  ~\vee~ \bigvee_{j=i+1}^n (x_j \Leftrightarrow \neg y_j)$, and
  $\cbbr{i}$ $=$ $\neg(y_i ~\wedge~$ $\bigwedge_{j=i+1}^n
  (x_j \Leftrightarrow y_j))$ $=$ $\neg y_i ~\vee~ \bigvee_{j=i+1}^n
  (x_j \Leftrightarrow \neg y_j)$.  With $\cbar{i}$ as the
  choice of $\psi_{i}$, we obtain $\psi_{i}$ $=$
  $y_i ~\vee~ \bigvee_{j=i+1}^n (x_j \Leftrightarrow \neg
  y_j)$. Clearly, $\psi_{n} = y_n$. On reverse-substituting, we
  get $\psi_{n-1} = y_{n-1} ~\vee~ (\psi_n \Leftrightarrow \neg
  y_n) = y_{n-1} \vee 0 = y_{n-1}$.  Continuing in this way, we get
  $\psi_{i} = y_i$ for all $i \in \{1, \ldots n\}$.  The same
  result is obtained regardless of whether we choose $\cbar{i}$
  or $\neg\cbbr{i}$ for each $\psi_{i}$.  Thus, our 
  approximation is good enough to solve this problem. In
  fact, it can be shown that $\cbar{i} = \cba{i}$ and
  $\cbbr{i} = \cbb{i}$ for all $i \in \{1, \ldots n\}$ in this
  example. \qed
\end{example}

Note that the approximations of Skolem functions, as given in
Eqn~(\ref{eq:init_est}), are efficiently computable for all $i \in
\{1, \ldots n\}$, as they involve evaluating $\nnf{F}$ with a subset
of inputs set to constants.  This takes no more than $\bigO(|F|)$ time
and space.  As illustrated by Example~\ref{ex:rabe-seshia}, these
approximations also often suffice to solve {\BFS}.  The following
theorem partially explains this.

\begin{theorem}\label{lemma:init_sk_good}
  \begin{enumerate}
  \item[(a)] For $i \in \{1, \ldots n\}$, suppose the following holds:
    \begin{align}
    \forall j \in \{1, \ldots i\}\;~ \nnf{F}(\bone^j, \bX_{j+1}^n, \bone^j, \ol{\bX}_{j+1}^n, \bY) \Rightarrow &
    \nnf{F}(\bone^{j-1}0, \bX_{j+1}^n, \bone^{j-1}1, \ol{\bX}_{j+1}^n, \bY) \nonumber\\
    & \hspace*{-5mm}~\vee\ 
    \nnf{F}(\bone^{j-1}1, \bX_{j+1}^n, \bone^{j-1}0, \ol{\bX}_{j+1}^n, \bY) \nonumber
    \end{align}
    Then $\exists \bX_{1}^i F(\bX, \bY) \Leftrightarrow \nnf{F}(\bone^i, \bX_{i+1}^n, \bone^i, \neg \bX_{i+1}^n,\bY)$.
  \item[(b)] If $\nnf{F}(\bX, \neg \bX, \bY)$ is in {\wDNNF}, then $\delta_i = \Delta_i$
    and $\gamma_i = \Gamma_i$ for every $i \in \{1, \ldots n\}$.
  \end{enumerate}
\end{theorem}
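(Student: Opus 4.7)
The plan is to handle parts (a) and (b) separately; (a) admits a direct induction on $i$, while (b) requires a structural lemma on the NNF DAG that crucially uses the {\wDNNF} hypothesis.

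For part (a), the forward direction $\exists \bX_1^i F(\bX,\bY) \Rightarrow \nnf{F}(\bone^i, \bX_{i+1}^n, \bone^i, \neg\bX_{i+1}^n, \bY)$ is immediate from Proposition~\ref{prop:exists-bounds}, so I focus on the reverse implication, which I prove by induction on $i$. The base case $i = 0$ reduces to Proposition~\ref{prop:f-fnnf}(b). In the inductive step, I specialize the inductive hypothesis at $i-1$ to $x_i := 0$ and $x_i := 1$ (plugging in the consistent dual values $\ol{x_i} = 1$ and $\ol{x_i} = 0$ respectively), obtaining
\begin{align*}
\exists \bX_1^{i-1} F(\bX_1^{i-1}, 0, \bX_{i+1}^n, \bY) &\Leftrightarrow \nnf{F}(\bone^{i-1}0, \bX_{i+1}^n, \bone^{i-1}1, \neg\bX_{i+1}^n, \bY),\\
\exists \bX_1^{i-1} F(\bX_1^{i-1}, 1, \bX_{i+1}^n, \bY) &\Leftrightarrow \nnf{F}(\bone^{i-1}1, \bX_{i+1}^n, \bone^{i-1}0, \neg\bX_{i+1}^n, \bY).
\end{align*}
The assumption of the theorem at $j = i$, after setting $\ol{\bX}_{i+1}^n := \neg\bX_{i+1}^n$, upper-bounds $\nnf{F}(\bone^i, \bX_{i+1}^n, \bone^i, \neg\bX_{i+1}^n, \bY)$ by the disjunction of the two right-hand sides above; by the inductive hypothesis this disjunction equals $\exists \bX_1^i F(\bX,\bY)$, closing the induction.

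For part (b), I reduce both equalities $\delta_i = \Delta_i$ and $\gamma_i = \Gamma_i$ to a single structural claim on subformulas $\Phi$ of $\nnf{F}$: for a chosen subset $S \subseteq \bX$, if $\Phi$ becomes {\wDNNF} under the consistent substitution $\ol{x} := \neg x$ for $x \in \bX$, then evaluating $\Phi$ with every $x \in S$ and every $\ol{x}$ (for $x \in S$) set to $1$ equals $\exists S\,\Phi$ under the consistent substitution. Applied with $\Phi$ equal to the cofactor of $\nnf{F}$ at $x_i = 0$ (resp.\ $x_i = 1$) and $S := \{x_1, \ldots, x_{i-1}\}$, with $\bX_{i+1}^n$ and $\bY$ treated as free parameters, this directly yields $\delta_i = \Delta_i$ (resp.\ $\gamma_i = \Gamma_i$); crucially, cofactoring preserves {\wDNNF}. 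The easy direction of the structural claim follows from positive unateness (Proposition~\ref{prop:f-fnnf}(a)). The harder direction I prove by structural induction on the NNF DAG of $\Phi$; leaves and $\vee$-nodes are routine because $\exists$ commutes with $\vee$.

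The main obstacle is the $\wedge$-node case $\Phi = \Phi_1 \wedge \cdots \wedge \Phi_k$, where the generic inequality $\exists S\,\bigwedge_j \Phi_j \le \bigwedge_j \exists S\,\Phi_j$ goes the wrong way. To close the gap I exploit {\wDNNF}: for each $x \in S$, it forbids conflicting occurrences of $x$ and $\ol{x}$ across distinct children, forcing one of three mutually exclusive cases---(i) only $x$-leaves occur among the $\Phi_j$, (ii) only $\ol{x}$-leaves occur, or (iii) both $x$- and $\ol{x}$-leaves occur, but they are confined to a single common child $\Phi_{j_0}$. In cases (i) and (ii), positive unateness of each $\Phi_j$ permits the uniform choice $x^* := 1$ or $x^* := 0$ that preserves every individual witness (the absent dual leaf being irrelevant to $\Phi_j$). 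In case (iii), the inductive hypothesis on $\Phi_{j_0}$ furnishes a consistent witness for $x$, while the other children are insensitive to $x$. Making these choices independently across all $x \in S$ assembles a single assignment that simultaneously witnesses $\bigwedge_j \Phi_j$, completing the induction.
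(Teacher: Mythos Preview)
Your proof of part~(a) is correct and essentially coincides with the paper's argument; the only cosmetic difference is that you anchor the induction at $i=0$ via Proposition~\ref{prop:f-fnnf}(b), whereas the paper starts at $i=1$.

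Your proof of part~(b) is also correct but follows a genuinely different route. The paper does \emph{not} prove part~(b) by a separate structural induction; instead it shows that the {\wDNNF} hypothesis forces the implication in part~(a) to hold for every $j$, and then simply invokes part~(a) and specializes (cofactoring at $x_i=0$ and $x_i=1$) to obtain $\delta_i=\Delta_i$ and $\gamma_i=\Gamma_i$. The key step is a short contradiction argument: if the condition in~(a) failed at some $j$, one fixes a witnessing assignment to all variables except $x_j,\ol{x}_j$, observes that the residual two-input circuit evaluates to $1$ only when both $x_j$ and $\ol{x}_j$ are $1$, and concludes there must be an $\wedge$-node fed by both an $x_j$-leaf and a $\neg x_j$-leaf, contradicting {\wDNNF}. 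By contrast, you bypass part~(a) entirely and prove the stronger structural claim directly by induction on the NNF DAG, with an explicit witness-merging argument at $\wedge$-nodes that exploits the three-way case split {\wDNNF} enforces on each variable. The paper's approach is shorter and reuses~(a); your approach is more constructive and localized (it tells you, node by node, how the consistent witness is assembled), which could be advantageous if one later wants finer-grained guarantees or to relax the hypothesis on parts of the DAG. One small imprecision in your write-up: when you say ``cofactor of $\nnf{F}$ at $x_i=0$'', you should also fix $\ol{x_i}=1$ (and apply the consistent substitution to $\ol{\bX}_{i+1}^n$) so that the structural claim lines up exactly with the definitions of $\delta_i$ and $\Delta_i$.
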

\begin{proof}
To prove part (a), we use induction on $i$.  The base case corresponds
to $i=1$.  Recall that $\exists \bX_1^1 F(\bX,\bY) \Leftrightarrow
\nnf{F}(1,\bX_{2}^n,0,\neg{\bX}_2^n,\bY) \vee
F(0,\bX_{2}^n,1,\neg{\bX}_2^n,\bY)$ by definition.
Proposition~\ref{prop:exists-bounds} already asserts that $\exists
\bX_1^1 F(\bX,\bY) \Rightarrow
\nnf{F}(1,\bX_{2}^n,1,\neg{\bX}_2^n,\bY)$.  Therefore, if the
condition in Theorem~\ref{lemma:init_sk_good}(a) holds for $i=1$, we
then have $\nnf{F}(1,\bX_{2}^n,1,\neg{\bX}_2^n,\bY) \Leftrightarrow
\nnf{F}(1,\bX_{2}^n,0,\neg{\bX}_2^n,\bY) \vee
F(0,\bX_{2}^n,1,\neg{\bX}_2^n,\bY)$, which in turn is equivalent to
$\exists \bX_1^1 F(\bX,\bY)$. This proves the base case.

Let us now assume (inductive hypothesis) that the statement of
Theorem~\ref{lemma:init_sk_good}(a) holds for $1 \le i < n$.  We prove
below that the same statement holds for $i+1$ as well.  Clearly,
$\exists \bX_1^{i+1} F(\bX,\bY) \Leftrightarrow \exists x_{i+1}
\left(\exists \bX_1^i F(\bX,\bY)\right)$.  By the inductive
hypothesis, this is equivalent to $\exists x_{i+1} \nnf{F}(\bone^i,
\bX_{i+1}^n, \bone^i, \neg \bX_{i+1}^n,\bY)$.  By definition of
existential quantification, this is equivalent to
$\nnf{F}(\bone^{i+1}, \bX_{i+2}^n, \bone^i0, \neg \bX_{i+2}^n,\bY)
\vee \nnf{F}(\bone^i0, \bX_{i+2}^n, \bone^{i+1}, \neg
\bX_{i+2}^n,\bY)$.  From the condition in
Theorem~\ref{lemma:init_sk_good}(a), we also have
\begin{align*}
  \nnf{F}(\bone^{i+1}, \bX_{i+2}^n, \bone^{i+1}, \ol{\bX}_{i+2}^n, \bY)
& \Rightarrow \nnf{F}(\bone^{i}0, \bX_{i+2}^n, \bone^{i+1},
\ol{\bX}_{i+2}^n, \bY)\\ &\vee \nnf{F}(\bone^{i+1}, \bX_{i+2}^n,
\bone^{i}0, \ol{\bX}_{i+2}^n, \bY)\end{align*}  The implication in the reverse direction follows from Proposition~\ref{prop:f-fnnf}(a).  
Thus we have a bi-implication above, which we have already seen is equivalent to $\exists \bX_1^{i+1} F(\bX,\bY)$.  This proves the inductive case.

To prove part (b), we first show that if $\nnf{F}(\bX, \neg\bX, \bY)$ is
in {\wDNNF}, then the condition in Theorem~\ref{lemma:init_sk_good}(a)
must hold for all $j \in \{1, \ldots n\}$.
  Theorem~\ref{lemma:init_sk_good}(b) then follows from the
  definitions of $\cba{i}$ and $\cbb{i}$ (see
  Section~\ref{sec:prelim}), from the statement of
  Theorem~\ref{lemma:init_sk_good}(a) and from the definitions of
  $\cbar{i}$ and $\cbbr{i}$ (see Eqn~\ref{eq:init_est}).

For $1 \leq j \leq n$, let $\zeta(\bX_{j+1}^n, \ol{\bX}_{j+1}^n,
\bY)$ denote $\nnf{F}(\bone^j, \bX_{j+1}^n, \bone^j,
\ol{\bX}_{j+1}^n, \bY)$ $\wedge$ $\neg\left(\nnf{F}(\bone^{j-1}0,
\bX_{j+1}^n, \bone^{j-1}1, \ol{\bX}_{j+1}^n, \bY) \vee \nonumber
\nnf{F}(\bone^{j-1}1, \bX_{j+1}^n, \bone^{j-1}0, \ol{\bX}_{j+1}^n,
\bY)\right)$.  To prove by contradiction, suppose $\nnf{F}$ is in {\wDNNF} but
there exists $j~(1 \le j \le n)$ such that $\zeta(\bX_{j+1}^n,
\ol{\bX}_{j+1}^n, \bY)$ is satisfiable.  Let $\bX_{j+1}^n =
\mathbf{\sigma}$, $\ol{\bX}_{j+1}^n = \mathbf{\kappa}$ and $\bY =
\mathbf{\theta}$ be a satisfying assignment of $\zeta$.  We now
consider the simplified circuit obtained by substituting $\bone^{j-1}$
for $\bX_{1}^{j-1}$ as well as for $\ol{\bX}_{1}^{j-1}$,
$\mathbf{\sigma}$ for $\bX_{j+1}^n$, $\mathbf{\kappa}$ for
$\ol{\bX}_{j+1}^n$ and $\mathbf{\theta}$ for $\bY$ in the AIG for
$\nnf{F}$.  This simplification replaces the output of every internal
node with a constant ($0$ or $1$), if the node evaluates to a constant
under the above assignment.  Note that the resulting circuit can have
only $x_j$ and $\ol{x}_j$ as its inputs.  Furthermore, since the
assignment satisfies $\zeta$, it follows that the simplified circuit
evaluates to $1$ if both $x_j$ and $\ol{x}_j$ are set to $1$, and it
evaluates to $0$ if any one of $x_j$ or $\ol{x}_j$ is set to $0$.
This can only happen if there is a node labeled $\wedge$ in the AIG
representing $\nnf{F}$ with a path leading from the leaf labeled
$x_j$, and another path leading from the leaf labeled $\neg x_j$.
This is a contradiction, since $\nnf{F}$ is in {\wDNNF}.  Therefore,
there is no $j \in \{1, \ldots n\}$ such that the condition of Theorem~\ref{lemma:init_sk_good}(a)
is violated.\qed

\end{proof}

In general, the candidate Skolem functions generated from the
approximations discussed above may not always be correct. 
Indeed, the conditions discussed above
are only sufficient, but not necessary, for the approximations
to be exact.
Hence, we need a separate check to see if our candidate
Skolem functions are correct. To do this, we 
use an \emph{error formula}
$\varepsilon_\bpsi(\bX', \bX, \bY) \equiv F(\bX', \bY) \wedge
\bigwedge_{i=1}^n (x_i \leftrightarrow \psi_i) \wedge \neg F(\bX,
\bY)$, as described in~\cite{fmcad2015:skolem}, and check its
satisfiability. The correctness of this check depends on the following
result from~\cite{fmcad2015:skolem}.
\begin{theorem}[\cite{fmcad2015:skolem}]\label{thm:epsilon-correct}
  $\varepsilon_\bpsi$ is unsatisfiable iff $\bpsi$ is a correct Skolem function vector.  
\end{theorem}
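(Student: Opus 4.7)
The plan is to prove both directions by contrapositive, since the three conjuncts of $\varepsilon_\bpsi$ are designed precisely to encode the three ingredients of a counterexample to Skolem correctness. Recall from Section~\ref{sec:prelim} that $\bpsi = (\psi_1(\bY),\ldots,\psi_n(\bY))$ is a correct Skolem function vector iff for every assignment $\mathbf{y}$ to $\bY$, whenever $\exists \bX\, F(\bX, \mathbf{y})$ holds, $F(\bpsi(\mathbf{y}), \mathbf{y})$ also holds.

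For the ``only if'' direction, I would assume $\bpsi$ is \emph{not} a correct Skolem function vector and exhibit a satisfying assignment of $\varepsilon_\bpsi$. Negating the Skolem property yields some input $\mathbf{y}$ and some witness $\mathbf{x}'$ with $F(\mathbf{x}', \mathbf{y})$ true but $F(\bpsi(\mathbf{y}), \mathbf{y})$ false. I then set $\bX' = \mathbf{x}'$, $\bY = \mathbf{y}$, and $\bX = \bpsi(\mathbf{y})$. Under this valuation, the first conjunct $F(\bX', \bY)$ holds by choice of $\mathbf{x}'$, the middle conjunct $\bigwedge_i (x_i \leftrightarrow \psi_i)$ holds because $\bX$ was defined to equal $\bpsi(\bY)$ componentwise, and $\neg F(\bX, \bY)$ holds by the failure assumption at $\mathbf{y}$.

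For the ``if'' direction, I would assume $\varepsilon_\bpsi$ is satisfiable and derive a violation of the Skolem property. Let $(\mathbf{x}', \mathbf{x}, \mathbf{y})$ be a satisfying assignment. The middle conjunct forces $\mathbf{x}_i = \psi_i(\mathbf{y})$ for every $i$, i.e., $\mathbf{x} = \bpsi(\mathbf{y})$. The first conjunct gives $F(\mathbf{x}', \mathbf{y})$, so $\exists \bX\, F(\bX, \mathbf{y})$ holds at this $\mathbf{y}$. The third conjunct gives $\neg F(\mathbf{x}, \mathbf{y}) = \neg F(\bpsi(\mathbf{y}), \mathbf{y})$. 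Thus at the input $\mathbf{y}$ the Skolem condition fails, so $\bpsi$ is not a correct Skolem function vector.

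The main obstacle here is essentially bookkeeping rather than any genuine mathematical difficulty: the proof is a direct unpacking of the two definitions. One subtlety worth noting is the role of the fresh copy $\bX'$ of the output variables in $\varepsilon_\bpsi$: without this copy, one could not simultaneously assert within a single formula both the existence of some witness to $F(\cdot, \bY)$ and the failure of $F(\bpsi(\bY), \bY)$, since the outputs $\bX$ are pinned down by the middle conjunct to equal $\bpsi(\bY)$. If $\psi_i$ is taken to depend on $\bX_{i+1}^n$ and $\bY$ (before the back-substitution described in Section~\ref{sec:prelim}), then $\bigwedge_i(x_i \leftrightarrow \psi_i)$ still determines $\bX$ uniquely from $\bY$ by reverse substitution starting at $x_n$, and the argument above goes through unchanged.
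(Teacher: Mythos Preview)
Your proof is correct and is the standard unfolding of the definitions. Note, however, that the paper does not supply its own proof of this theorem: it is stated as a cited result from~\cite{fmcad2015:skolem}, so there is nothing in the paper to compare your argument against. Your contrapositive-in-both-directions argument is exactly the natural one, and your closing remark about the case where $\psi_i$ depends on $\bX_{i+1}^n$ (resolved by back-substitution starting from $x_n$) correctly handles the form in which the candidate functions are actually produced in Algorithm~\ref{alg:easy}.
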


\begin{algorithm}[h!]
  \KwIn{$\nnf{F}(\bX,\bY)$ in NNF (or wDNNF) with inputs $|\bY| = m$, outputs $|\bX| = n$,}
\KwOut{ 
Candidate Skolem Functions $\bpsi=(\psi_1,\ldots, \psi_n)$}
{\bfseries Initialize}: Fix sets $U_0=U_1=\emptyset$\;
\Repeat{F is unchanged\comment{No Unate variables remaining}}{ \comment{Repeatedly check for Unate variables}\\

\For{each $x_i\in \bX\setminus (U_0\cup U_1)$}{
      \If{$\nnf{F}$ is positive unate in $x_i$ \comment{check $x_i$ pure or $\eta_i^+$ (Eqn~\ref{eqn:eta_i_plus})
      SAT} \;}{
       $\nnf{F}:=\nnf{F}[x_i=1]$, $U_1=U_1\cup \{x_i\}$}
      \ElseIf{$\nnf{F}$ is negative unate in $x_i$ \comment{$\neg x_i$ pure or $\eta^-$ (Eqn~\ref{eqn:eta_i_minus}) SAT}
      \;}{
      $\nnf{F}:=\nnf{F}[x_i=0]$, $U_0=U_0\cup\{x_i\}$}
      }}
Choose an ordering $\preceq$ of $\bX$ \comment{Section~\ref{sec:expt} discusses ordering used}\;
\For{each $x_i\in \bX$ in $\preceq$ order}{
          \If{$x_i\in U_j$ for $j\in \{0,1\}$  \comment{Assume $x_1 \preceq x_2 \preceq \ldots x_n$}\;}{$\psi_i=j$}
          \Else{$\psi_i$ is as defined in (Eq~\ref{eq:init_est})}}
\If{error formula $\epsilon_\bpsi$ is UNSAT}{terminate and output $\bpsi$}
\Else{call \textsf{Phase 2}}
\caption{\bfss}
\label{alg:easy}
\end{algorithm}

We now combine all the above ingredients to come up with
algorithm {\bfss} (for \emph{Blazingly Fast Skolem Synthesis}),
as shown in Algorithm~\ref{alg:easy}. The algorithm can be divided
into three parts. In the first part (lines 2-11), unateness is
checked. This is done in two ways: (i) we identify pure literals in
$F$ 
by simply examining the labels of leaves in the DAG representation of
$F$ in NNF, and (ii) we check the satisfiability of the formulas
$\eta_i^+$ and $\eta_i^-$, as defined in Eqn~\ref{eqn:eta_i_plus} and
Eqn~\ref{eqn:eta_i_minus}.  This requires invoking a SAT solver in the
worst-case, and is repeated at most $\mathcal{O}(n^2)$ times until
there are no more unate variables. Hence this 
requires $\mathcal{O}(n^2)$ calls to a SAT solver. Once we
have done this, by Proposition~\ref{prop:unate}, the
constants $1$ or $0$ (for positive or negative unate variables
respectively) are correct Skolem functions for these variables.

In the second part, we fix an ordering of the remaining output
variables according to an experimentally sound heuristic, as described
in Section~\ref{sec:expt}, and compute candidate Skolem functions for
these variables according to Equation~\ref{eq:init_est}.
We then check the satisfiability of the error formula $\epsilon_\bpsi$
to determine if the candidate Skolem functions are indeed correct.
If the error formula is found to be unsatisfiable, we know from
Theorem~\ref{thm:epsilon-correct} that we have the correct Skolem
functions, which can therefore be output.  This concludes phase 1 of
algorithm {\bfss}.  If the error formula is found to be satisfiable,
we move to phase 2 of algorithm {\bfss} -- an adaptation of the
CEGAR-based technique described in~\cite{fmcad2015:skolem}, and
discussed briefly in Section~\ref{sec:cegar}. It is not difficult to
see that the running time of phase 1 \t is polynomial in the size
of the input, relative to an {\NP}-oracle (SAT solver in
practice). This also implies that the Skolem functions generated can
be of at most polynomial size. Finally, from
Theorem~\ref{lemma:init_sk_good}(b) we also obtain that if $F$ is
in \wDNNF, Skolem functions generated in phase 1 are correct. From
the above reasoning, we obtain the following properties of phase 1 of
{\bfss}:
\begin{theorem}
\begin{enumerate}
\item For all unate variables, phase 1 of {\bfss} computes correct Skolem functions.
\item If $\hat{F}$ is in \wDNNF, phase 1 of {\bfss} computes all Skolem functions correctly.
\item The running time of phase 1 of {\bfss}  is polynomial in input size, relative to an {\NP}-oracle. Specifically, the algorithm makes $\mathcal{O}(n^2)$ calls to an
{\NP}-oracle.
\item The candidate Skolem functions output by phase 1 of {\bfss}
have size at most polynomial in the size of the input.
\end{enumerate}
\end{theorem}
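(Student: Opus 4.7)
The plan is to address the four claims separately, since each follows by appealing to results already established in the excerpt. I would start with claim (1). In the unate-detection loop, a variable $x_i$ is classified positive (resp.\ negative) unate only if either the pure-literal check succeeds or the SAT call on $\eta_i^+$ (resp.\ $\eta_i^-$) returns UNSAT; by Eqn~\ref{eqn:eta_i_plus}--\ref{eqn:eta_i_minus} this is exactly the semantic condition for unateness of the current $\nnf{F}$, and Proposition~\ref{prop:unate} then certifies $\psi_i = 1$ or $\psi_i = 0$ accordingly. The repeat-until wrapper is needed because substituting a constant for a newly-detected unate variable can expose further unate variables, and the argument must explicitly account for this cascade.

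For claim (2), I would appeal directly to Theorem~\ref{lemma:init_sk_good}(b): if $\nnf{F}$ is in \wDNNF{} then $\cbar{i} = \cba{i}$ and $\cbbr{i} = \cbb{i}$ for every $i$, and from the characterization recalled in Section~\ref{sec:prelim} — any $\psi_i$ satisfying $\cba{i} \Rightarrow \psi_i \Rightarrow \neg \cbb{i}$ is a correct Skolem function — the choices made in Eqn~\ref{eq:init_est} are correct. One mild subtlety is to verify that substituting $0$ or $1$ for detected unate variables in $\nnf{F}$ preserves the \wDNNF{} property; this holds because constant-folding can only delete subgraphs or collapse gates, and cannot introduce a forbidden complementary literal pair under any $\wedge$-node, so the hypothesis continues to apply when Eqn~\ref{eq:init_est} is evaluated after the unate pass.

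For claim (3), the bookkeeping is the standard one for a nested unateness loop: each pass over the $n$ outputs performs at most two SAT calls per variable, and each pass that does not cause termination removes at least one element from $\bX\setminus(U_0\cup U_1)$, so there are at most $n$ passes and hence $\mathcal{O}(n^2)$ SAT calls in the unate phase. Candidate generation via Eqn~\ref{eq:init_est} is purely syntactic and oracle-free, and the final error-formula check on $\varepsilon_\bpsi$ contributes one additional SAT call. For claim (4), each $\cbar{i}$ and $\cbbr{i}$ is a cofactor of $\nnf{F}$ with one extra negation, so its AIG has size at most $|\nnf{F}|+1 = \mathcal{O}(|F|)$; constants for unate variables have size $\mathcal{O}(1)$; summing over the $n$ outputs gives total size $\mathcal{O}(n\cdot|F|)$, polynomial in the input.

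There is no serious mathematical obstacle here: the heavy lifting has been done by Proposition~\ref{prop:unate}, Theorem~\ref{lemma:init_sk_good}, and the constructions in Eqn~\ref{eq:init_est}. The one point to handle carefully is the scoping of ``phase 1 of \bfss'': the statement concerns the candidate functions produced \emph{before} escalation to phase 2, so the size and oracle-call bounds are accounted only up to the single error-formula test, not any subsequent CEGAR refinement.
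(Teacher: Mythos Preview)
Your proposal is correct and follows essentially the same approach as the paper. The paper does not give a separate formal proof of this theorem; it simply states ``From the above reasoning, we obtain the following properties'' and relies on the preceding discussion (Proposition~\ref{prop:unate} for unate variables, Theorem~\ref{lemma:init_sk_good}(b) for the \wDNNF{} case, the $\mathcal{O}(n^2)$ SAT-call count for the unate loop, and the $\mathcal{O}(|F|)$ size observation for each candidate from Eqn~\ref{eq:init_est}). Your write-up expands exactly these ingredients, and your additional remark that constant substitution preserves the \wDNNF{} property is a detail the paper leaves implicit.
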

        
\paragraph{\bfseries Discussion:}We make two crucial and related observations. First, by our hardness
results in Section~\ref{sec:eth}, we know that the above algorithm
cannot solve \BFS\ for all inputs,
unless some well-regarded complexity-theoretic conjectures fail.  As a
result, we must go to phase 2 on at least some inputs. Surprisingly,
our experiments show that this is not necessary in the majority
of benchmarks.

The second observation tries to understand why phase 1 works in most
cases in practice. While a conclusive explanation isn't easy, we
believe Theorem~\ref{lemma:init_sk_good} explains the success of phase
1 in several cases.  By~\cite{darwiche-jacm}, we know that all Boolean
functions have a {\DNNF} (and hence \wDNNF) representation, although
it may take exponential time to compute this representation. This
allows us to define two preprocessing procedures. In the first, we
identify cases where we can directly convert to \wDNNF and use the
Phase 1 algorithm above. And in the second, we use several
optimization scripts available in the ABC~\cite{abc-tool} library to
optimize the AIG representation of $\nnf{F}$.  For a majority of
benchmarks, this appears to yield a representation of $\nnf{F}$ that
allows the proof of Theorem~\ref{lemma:init_sk_good}(a) to go
through. For the rest, we apply the Phase 2 algorithm as described
below.

\paragraph{\bfseries Quantitative guarantees of ``goodness''}
Given our theoretical and practical insights of the applicability of
phase 1 of {\bfss}, it would be interesting to measure how much
progress we have made in phase 1, even if it does not give the correct
Skolem functions. One way to measure this ``goodness'' is to estimate
the number of counterexamples as a fraction of the size of the input
space.
Specifically, given the error formula, we get an
approximate count of the number of models for this
formula \emph{projected on the inputs $\bY$}.  This can be obtained
efficiently in practice with high confidence using state-of-the-art
approximate model counters, viz.~\cite{approxmc}, with complexity in
${\BPP}^{\NP}$. The approximate count thus obtained, when divided by
$2^{|\bY|}$ gives the fraction of input combinations for which the
candidate Skolem functions output by phase 1 do not work correctly.
We call this the \emph{goodness ratio} of our approximation.

\section{Phase 2: Counterexample-guided refinement}
\label{sec:cegar}
For phase 2, we can use any off-the-shelf worst-case exponential-time
Skolem function generator. However, given that we already have
candidate Skolem functions with guarantees on their ``goodness'', it
is natural to use them as starting points for phase 2. Hence, we start
off with candidate Skolem functions for all $x_i$ as computed in phase
1, and then update (or refine) them in a counterexample-driven manner.
Intuitively, a counterexample is a value of the inputs $\bY$ for which
there exists a value of $\bX$ that renders $F(\bX, \bY)$ true, but for
which $F(\bpsi, \bY)$ evaluates to false.  As shown
in~\cite{fmcad2015:skolem}, given a candidate Skolem function vector,
every satisfying assignment of the error formula $\varepsilon_{\bpsi}$
gives a counterexample.  The refinement step uses this satisfying
assignment to update an appropriate subset of the approximate
$\cbar{i}$ and $\cbbr{i}$ functions computed in phase 1. The entire
process is then repeated until no counterexamples can be found.  The
final updated vector of Skolem functions then gives a solution of the
{\BFS} problem. Note that this idea is not
new~\cite{fmcad2015:skolem,tacas2017}. The only significant enhancement
we do over the algorithm in~\cite{fmcad2015:skolem} is to use an
almost-uniform sampler~\cite{unigen2} to efficiently sample the space
of counterexamples almost uniformly.  This allows us to do refinement
with a diverse set of counterexamples, instead of using
counterexamples in a corner of the solution space of
$\varepsilon_{\bpsi}$ that the SAT solver heuristics zoom down on.

\section{Experimental results} 
\label{sec:expt}

\noindent{\bf Experimental methodology.}
Our implementation consists of two parallel pipelines that accept the
same input specification but represent them in two different ways.
The first pipeline takes the input formula as an AIG and builds an NNF
(not necessarily wDNNF) DAG, while the second pipeline builds an ROBDD
from the input AIG using dynamic variable reordering (no restrictions
on variable order), and then obtains a wDNNF representation from it
using the linear-time algorithm described in~\cite{darwiche-jacm}.
Once the NNF/wDNNF representation is built, we use Algorithm
~\ref{alg:easy} in Phase 1 and CEGAR-based synthesis using
$\unigen$\cite{unigen2} to sample counterexamples in Phase 2. We call
this ensemble of two pipelines as $\bfss$.  We compare $\bfss$ with
the following algorithms/tools: $(i)$ $\parsyn$~\cite{tacas2017},
$(ii)$ $\cadet$~\cite{cadet}, $(iii)$
$\rsynth$~\cite{rsynth:fmcad2017}, and $(iv)$ $\abssynsk$ (based on
the {\BFS} step of $\abssynorig$~\cite{abssynthe}).

Our implementation of $\bfss$ uses the ABC~\cite{abc-tool} library to
represent and manipulate Boolean functions. Two different SAT solvers
can be used with {\bfss}: ABC's default SAT solver, or
$\unigen$ \cite{unigen2} (to give almost-uniformly distributed
counterexamples). All our experiments use $\unigen$.

We consider a total of $504$ benchmarks, taken from four different
domains:
\begin{itemize}
\item[(a)] forty-eight {\em Arithmetic benchmarks}
from~\cite{rsynth}, with varying bit-widths (viz. $32$, $64$, $128$,
$256$, $512$ and $1024$) of arithmetic operators,
\item[(b)] sixty-eight {\em
Disjunctive Decomposition benchmarks} from~\cite{tacas2017}, generated
by considering some of the larger sequential circuits in the HWMCC10
benchmark suite,
\item[(c)] five {\em Factorization benchmarks}, also
from~\cite{tacas2017}, representing factorization of numbers of
different bit-widths ($8$, $10$, $12$, $14$, $16$), and
\item[(d)] three
hundred and eighty three {\em QBFEval benchmarks}, taken from the
Prenex 2QBF track of QBFEval 2017 \cite{qbfeval2017}\footnote{The
track contains $384$ benchmarks, but we were unsuccessful in converting
$1$ benchmark to some of the formats required by the various tools.}.
\end{itemize}
Since different tools accept benchmarks in different formats, each
benchmark was converted to both \texttt{qdimacs}
and \texttt{verilog/aiger} formats.  All benchmarks and the procedure
by which we generated (and converted) them are detailed
in \cite{cav2018:benchmarks}. Recall that we use two pipelines for
{\bfss}.  We use ``balance; rewrite -l; refactor -l; balance; rewrite
-l; rewrite -lz; balance; refactor -lz; rewrite -lz; balance'' as the
ABC script for optimizing the AIG representation of the input
specification. We observed that while this results in only $4$
benchmarks being in wDNNF in the first pipeline, $219$ benchmarks were
solved in Phase 1 using this pipeline.  This is attributable to
specifications being unate in several output variables, and also
satisfying the condition of Theorem~\ref{lemma:init_sk_good}(a) (while
not being in wDNNF). In the second pipeline, however, we could
represent $230$ benchmarks in wDNNF, and all of these were solved in
Phase 1.

For each benchmark, the order $\preceq$ (ref. step 12 of
Algorithm~\ref{alg:easy}) in which Skolem functions are generated is
such that the variable which occurs in the transitive fan-in of the
least number of nodes in the AIG representation of the specification
is ordered before other variables.  This order ($\preceq$) is used for
both $\bfss$ and $\parsyn$.  Note that the order $\preceq$ is
completely independent of the dynamic variable order used to construct
an ROBDD of the input specification in the second pipeline, prior to
getting the wDNNF representation.

All experiments were performed on a message-passing cluster, with 20
cores and $64$ GB memory per node, each core being a $2.2$ GHz Intel
Xeon processor.  The operating system was Cent OS 6.5.  Twenty cores
were assigned to each run of $\parsyn$.  For $\rsynth$ and $\cadet$ a
single core on the cluster was used, since these tools don't exploit
parallel processing. Each pipeline of $\bfss$ was executed on a single
node; the computation of candidate functions, building of error
formula and refinement of the counterexamples was performed
sequentially on $1$ thread, and $\unigen$ had $19$ threads at its
disposal (idle during Phase 1).

The maximum time given for execution of any run was $3600$ seconds.
The total amount of main memory for any run was restricted to $16$GB.
The metric used to compare the algorithms was {\it time taken to
synthesize Boolean functions}.  The time reported for $\bfss$ is the
better of the two times obtained from the alternative pipelines
described above.  Detailed results from the individual pipelines are
available in Appendix \ref{sec:appendix}.

\noindent{\bf Results.}
Of the $504$ benchmarks, $177$ benchmarks were not solved by any tool
-- $6$ of these being from arithmetic benchmarks and $171$ from
QBFEval.

\begin{table}[ht]
\begin{center}
\begin{scriptsize}
\begin{tabular}{|c|c|c|c|c|c|}
    \hline {Benchmark} & {Total } & {\# Benchmarks}  & {Phase 1} & {Phase 2} & {Solved By} \\ 
     {Domain} &  {Benchmarks} & {Solved} & {Solved} &{Started} & {Phase 2}  \\ 
\hline
{QBFEval} & 383 &   170 & 159 & 73 & 11  \\ 
\hline
{Arithmetic} & 48 & 35  &  35 &  8  & 0   \\ 
\hline
{Disjunctive} &  & & & &   \\
{Decomposition} & 68  & 68 & 66 & 2 & 2   \\ 
\hline
{Factorization} & 5 &  5 & 5 & 0  & 0  \\ 
\hline
  \end{tabular}
\end{scriptsize}
\caption{$\bfss$: Performance summary of combined pipelines}
\label{tab:bfss}
\end{center}
\vspace*{-8mm}
\end{table}
Table \ref{tab:bfss} gives a summary of the performance of $\bfss$
(considering the combined pipelines) over different benchmarks
suites. Of the $504$ benchmarks, $\bfss$ was successful on $278$
benchmarks; of these, $170$ are from QBFEval, $68$ from Disjunctive
Decomposition, $35$ from Arithmetic and $5$ from Factorization.

Of the $383$ benchmarks in the QBFEval suite, we ran $\bfss$ only on
$254$ since we could not build succinct AIGs for the remaining
benchmarks. Of these, \emph{$159$ benchmarks were solved by Phase 1
(i.e., 62\% of built QBFEval benchmarks)} and $73$ proceeded to Phase
2, of which $11$ reached completion. On another $11$ QBFEval
benchmarks Phase 1 timed out.  Of the $48$ Arithmetic
benchmarks, \emph{Phase 1 successfully solved $35$ (i.e., $\sim
72$\%)} and Phase 2 was started for $8$ benchmarks; Phase 1 timed out
on $5$ benchmarks.  Of the $68$ Disjunctive Decomposition
benchmarks, \emph{Phase 1 successfully solved $66$
benchmarks (i.e., 97\%)}, and Phase 2 was started and reached
completion for $2$ benchmarks.  For the $5$ Factorization benchmarks,
Phase 1 was successful on all $5$ benchmarks.

Recall that the goodness ratio is the ratio of the number of {\it
counterexamples remaining} to the {\it total size of the input space}
after Phase 1.  For all benchmarks solved by Phase 1, the goodness
ratio is $0$.  We analyzed the goodness ratio at the beginning of
Phase 2 for $83$ benchmarks for which Phase 2 started.  For $13$
benchmarks this ratio was small $(< 0.002)$, and Phase 2 reached
completion for these.  Of the remaining benchmarks, $34$ also had a
small goodness ratio ($< 0.1$), indicating that we were close to the
solution at the time of timeout.  However, $27$ benchmarks in QBFEval
had goodness ratio  greater than $0.9$, indicating that most of the
counter-examples were not eliminated by timeout.

We next compare the performance of $\bfss$ with other state-of-art
tools. For clarity, since the number of benchmarks in the QBFEval
suite is considerably greater, we plot the QBFEval benchmarks
separately.

\medskip

\begin{figure}[t]
\centering
\begin{subfigure}{2.3in}
  \includegraphics[angle=-90,scale=0.27] {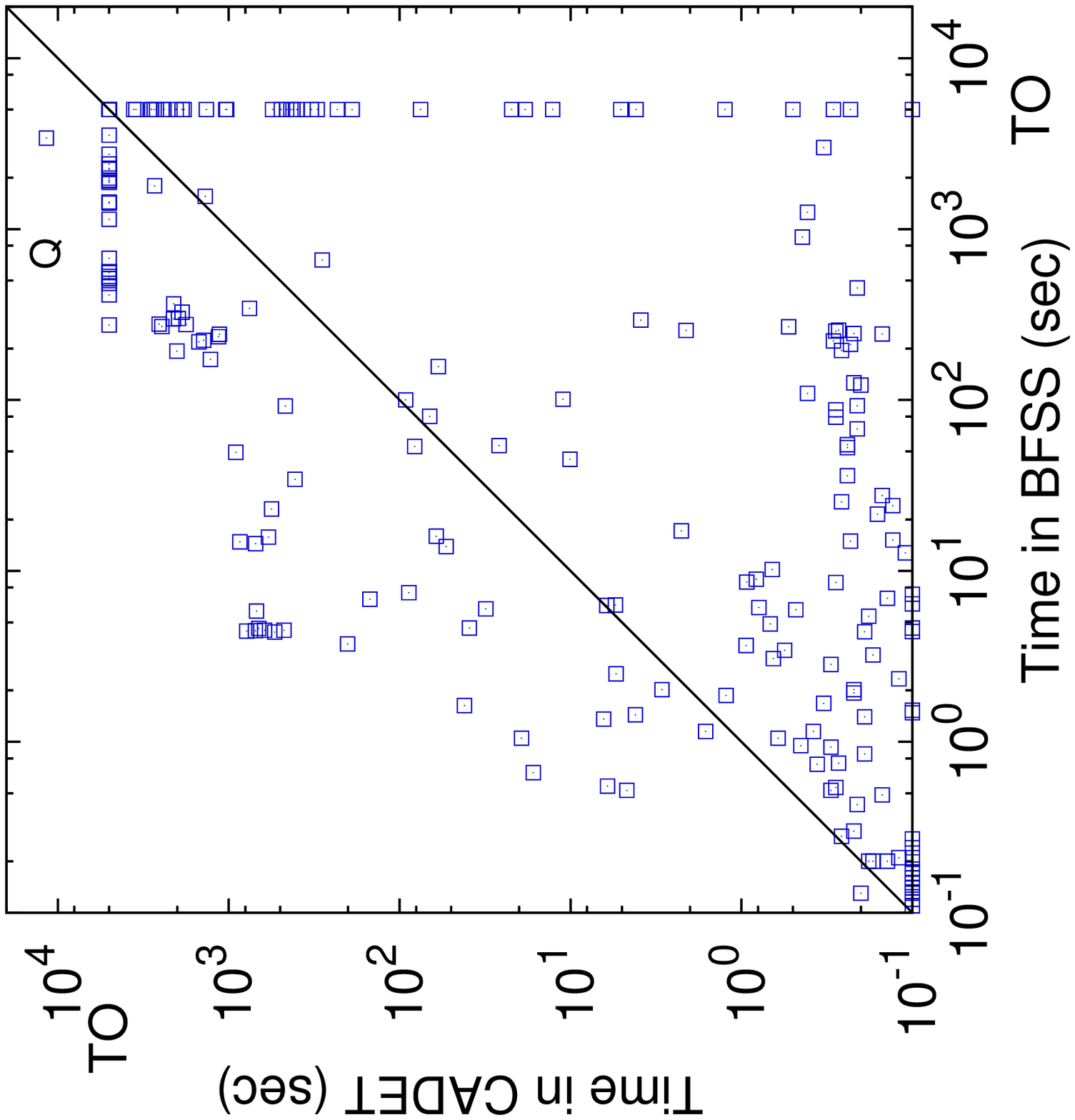} 
\end{subfigure}
\begin{subfigure}{2.3in}
  \includegraphics[angle=-90,scale=0.27] {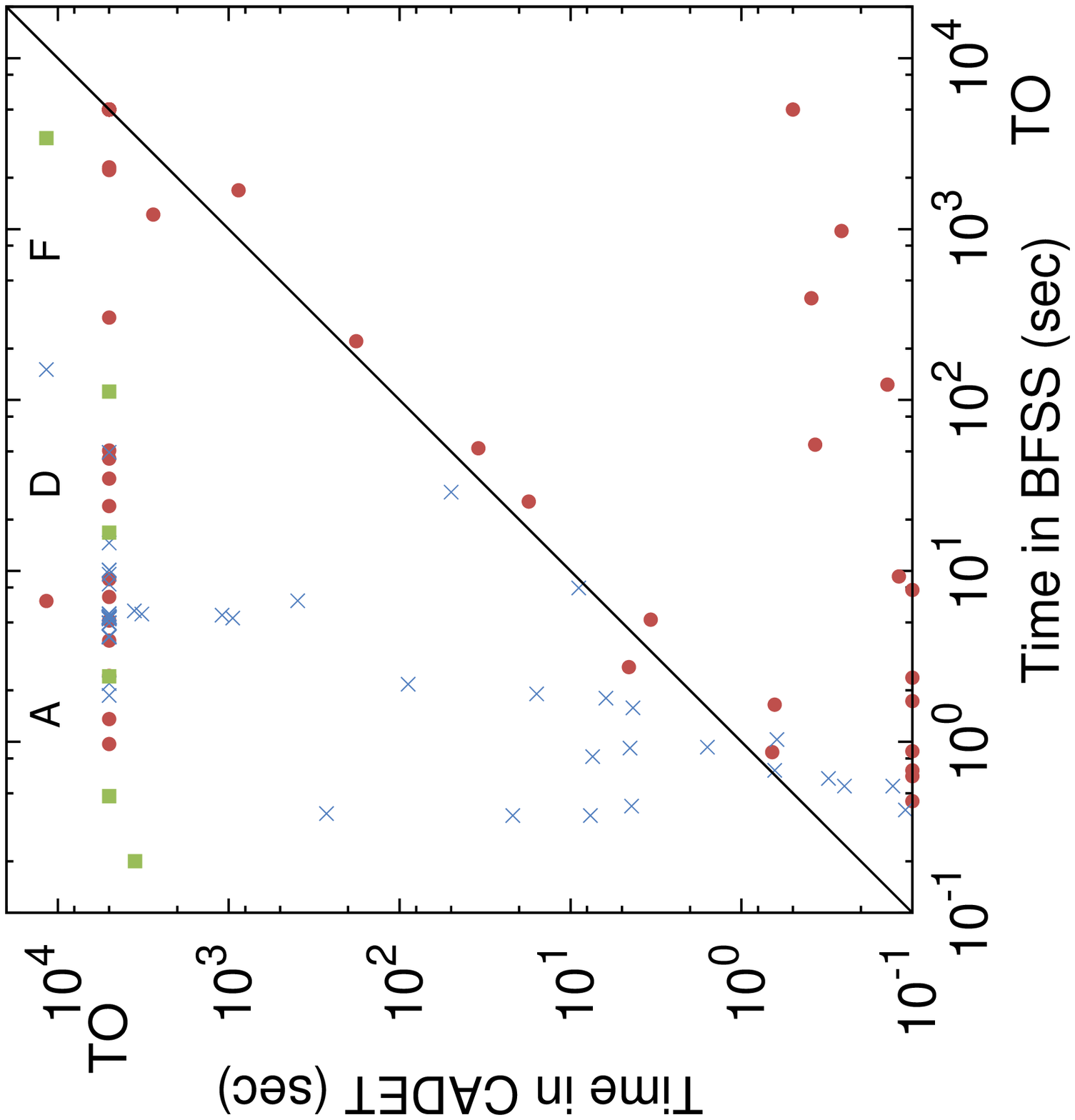} 
\end{subfigure}
\caption{$\bfss$ vs $\cadet$: Legend: \textsf{Q}: QBFEval, \textsf{A}: Arithmetic, \textsf{F}: Factorization, \textsf{D}: Disjunctive Decomposition. \textsf{TO}: benchmarks for which the corresponding algorithm was unsuccessful.}
\label{fig:bfsscadet}
\end{figure}

\noindent { $\bfss$ vs $\cadet$}: 
Of the $504$ benchmarks, $\cadet$ was successful on $231$ benchmarks,
of which $24$ belonged to Disjunctive Decomposition, $22$ to
Arithmetic, $1$ to Factorization and $184$ to QBFEval.
Figure \ref{fig:bfsscadet}(a) gives the performance of the two
algorithms with respect to time on the QBFEval suite. Here, $\cadet$
solved $35$ benchmarks that $\bfss$ could not solve, whereas $\bfss$
solved $21$ benchmarks that could not be solved by $\cadet$.
Figure \ref{fig:bfsscadet}(b) gives the performance of the two
algorithms with respect to time on the Arithmetic, Factorization and
Disjunctive Decomposition benchmarks.  In these categories, there were
a total of $62$ benchmarks that $\bfss$ solved that $\cadet$ could not
solve, and there was $1$ benchmark that $\cadet$ solved but $\bfss$
did not solve.  While $\cadet$ takes less time on Arithmetic
benchmarks and many QBFEval benchmarks, on Disjunctive Decomposition
and Factorization, $\bfss$ takes less time.
\begin{figure}[t]
\centering
\begin{subfigure}{2.3in}
  \includegraphics[angle=-90,scale=0.27] {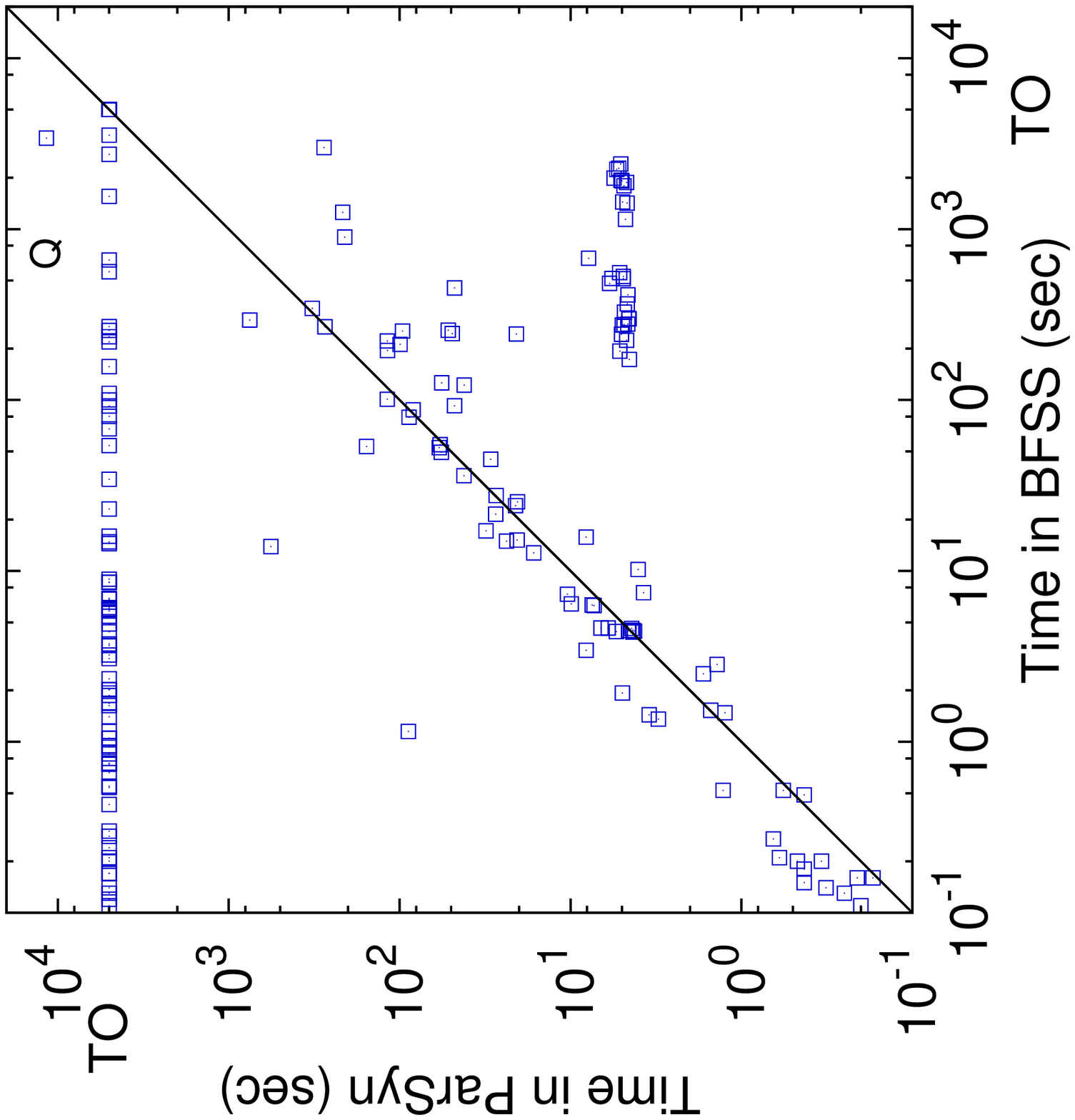} 
\end{subfigure}
\begin{subfigure}{2.3in}
  \includegraphics[angle=-90,scale=0.27] {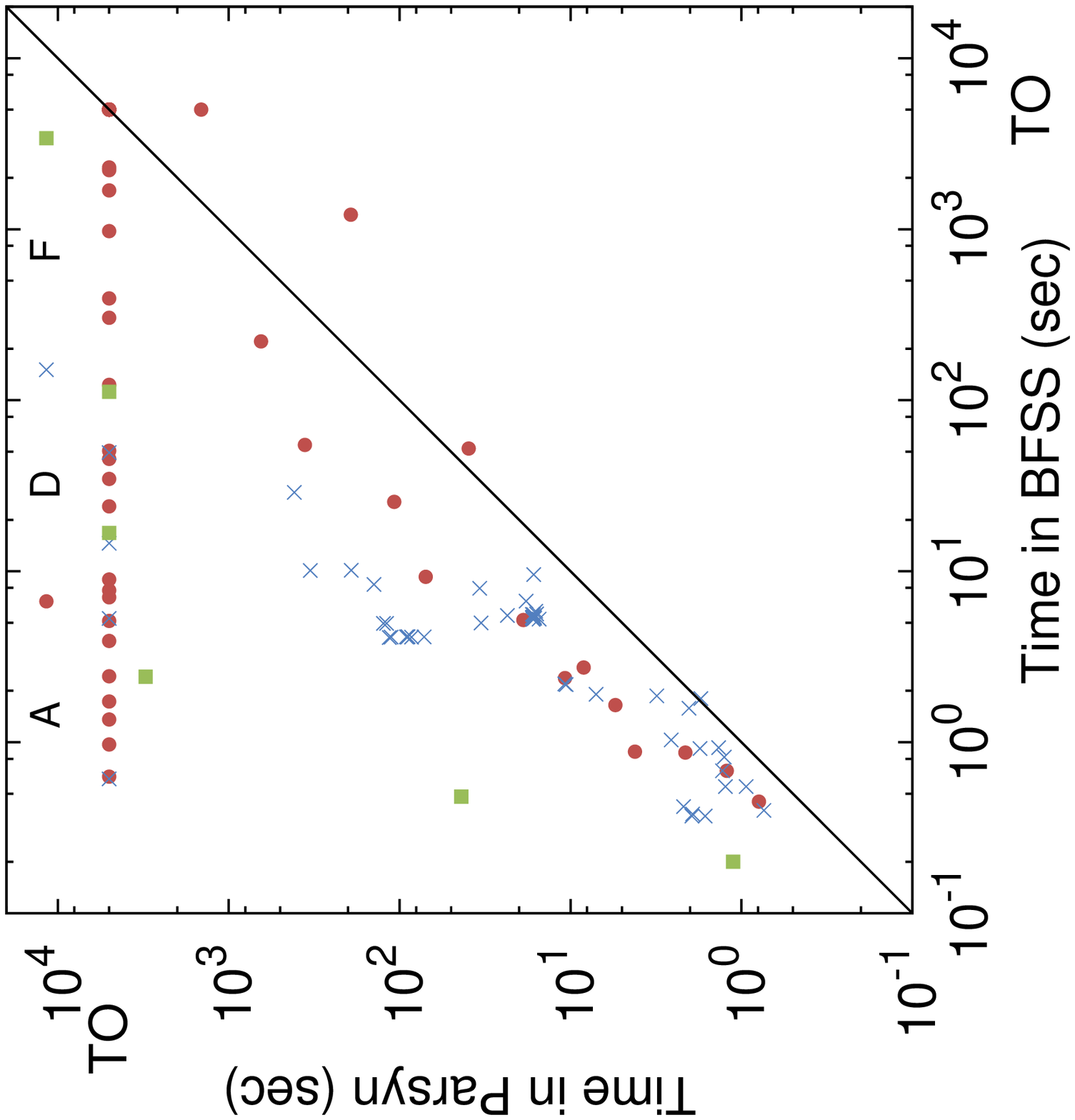} 
\label{fig:bfssparsyntacmis}
\end{subfigure}
\caption{$\bfss$ vs $\parsyn$ (for legend see Figure \ref{fig:bfsscadet})}
\label{fig:bfssparsyn}
\end{figure}

\noindent \bfss\  vs \parsyn:
Figure \ref{fig:bfssparsyn} shows the comparison of time taken by
$\bfss$ and $\parsyn$.  $\parsyn$ was successful on a total of $185$
benchmarks, and could solve $1$ benchmark which $\bfss$ could not
solve.  On the other hand, $\bfss$ solved $94$ benchmarks that $\parsyn$ could not
solve.  From Figure \ref{fig:bfssparsyn}, we can see that on most of
the Arithmetic, Disjunctive Decomposition and Factorization
benchmarks, $\bfss$ takes less time than $\parsyn$.

\begin{figure}[t]
\centering
\begin{subfigure}{2.3in}
  \includegraphics[angle=-90,scale=0.27] {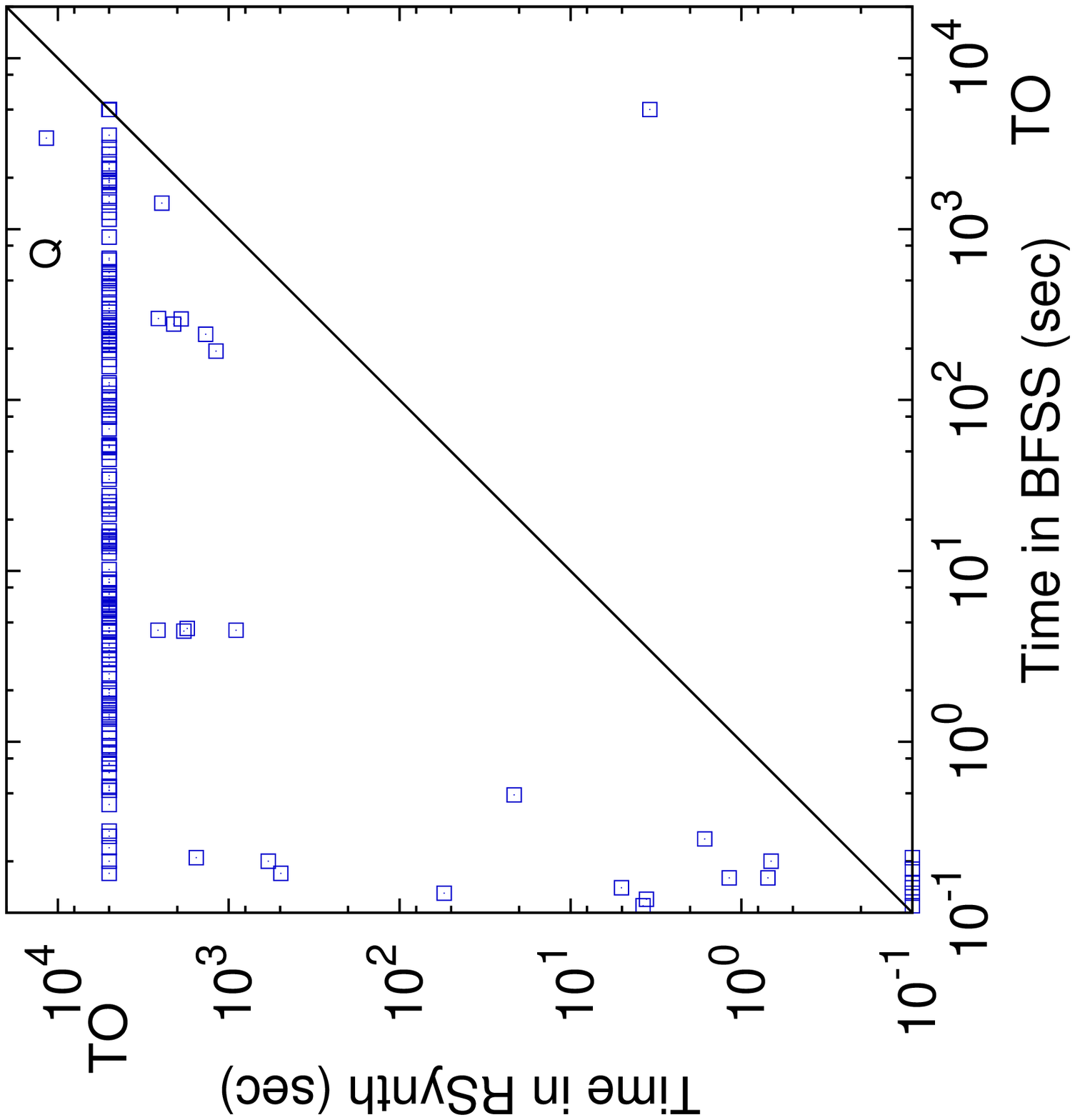} 
\end{subfigure}
\begin{subfigure}{2.3in}
  \includegraphics[angle=-90,scale=0.27] {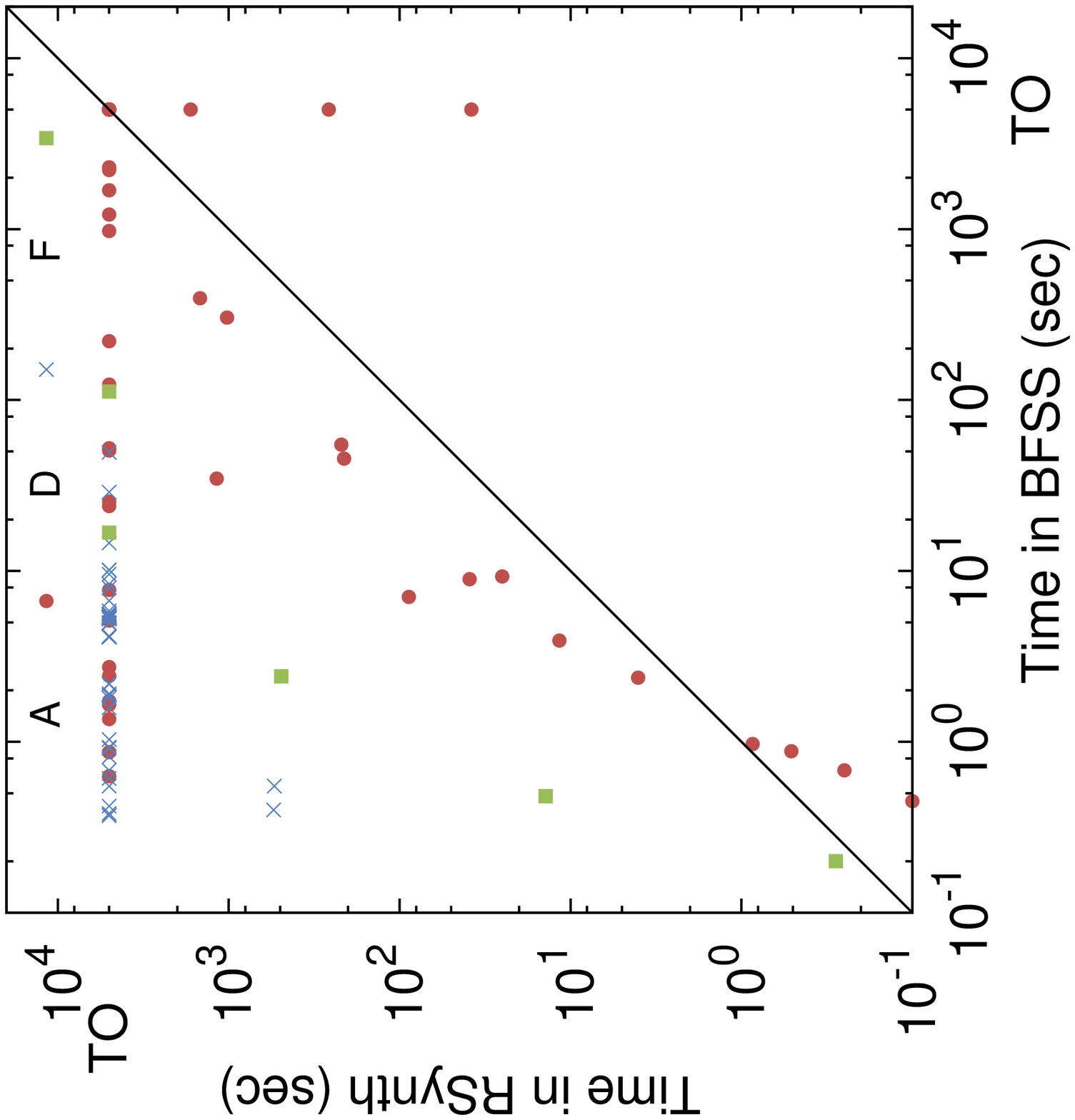} 
\end{subfigure}
\caption{$\bfss$ vs $\rsynth$ (for legend see Figure \ref{fig:bfsscadet})} 
\label{fig:bfssrsynth}
\end{figure}

\noindent { $\bfss$ vs $\rsynth$}: We next compare the performance of $\bfss$ with $\rsynth$. As shown in Figure \ref{fig:bfssrsynth}, $\rsynth$ was successful on $51$ benchmarks,
with $4$ benchmarks that could be solved by $\rsynth$ but not by
$\bfss$. In contrast, $\bfss$ could solve $231$ benchmarks that
$\rsynth$ could not solve!  Of the benchmarks that were solved by both
solvers, we can see that $\bfss$ took less time on most of them.

\begin{figure}[b]
\centering
\begin{subfigure}{2.3in}
  \includegraphics[angle=-90,scale=0.26] {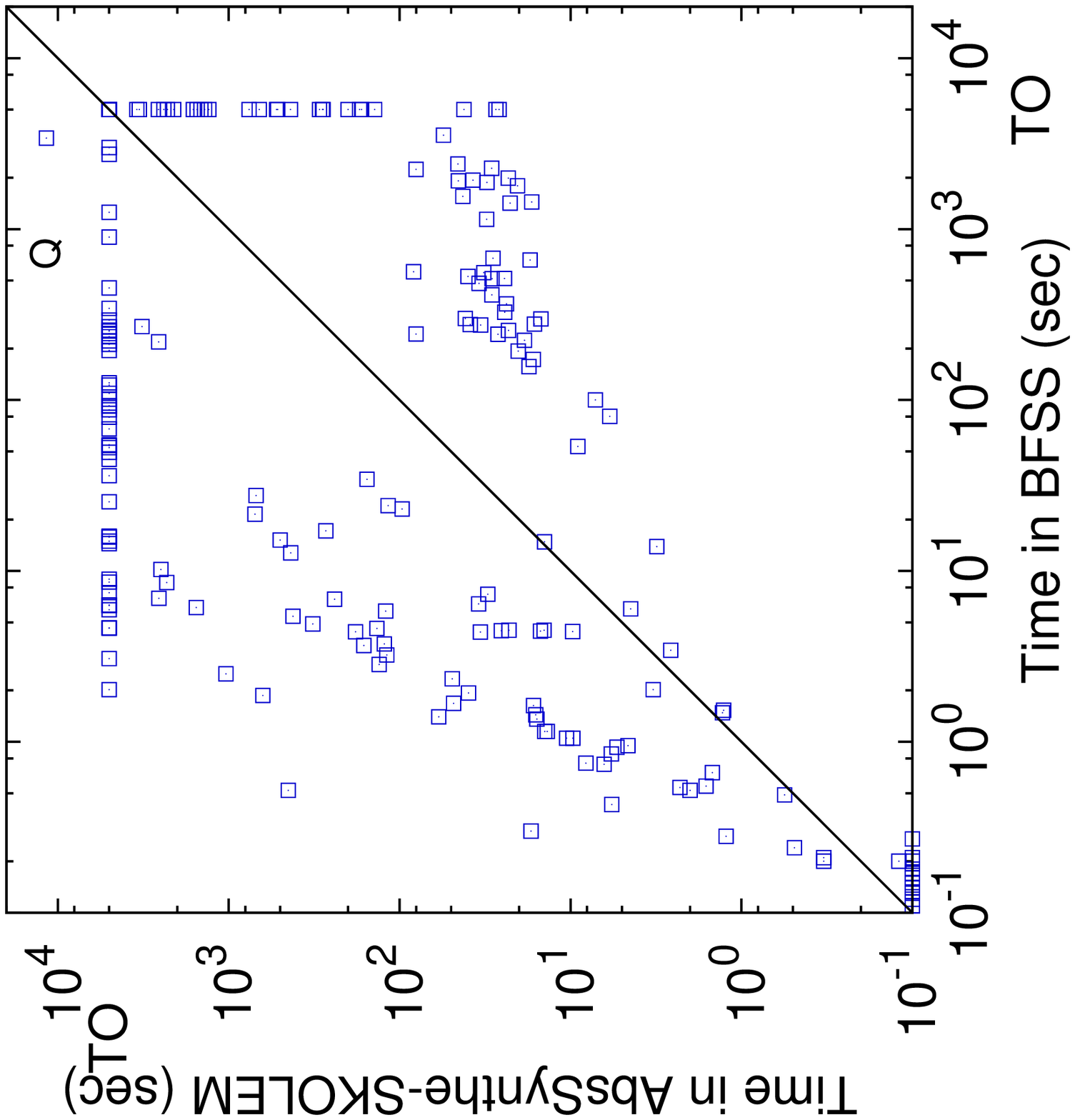} 
\end{subfigure}
\begin{subfigure}{2.3in}
  \includegraphics[angle=-90,scale=0.26] {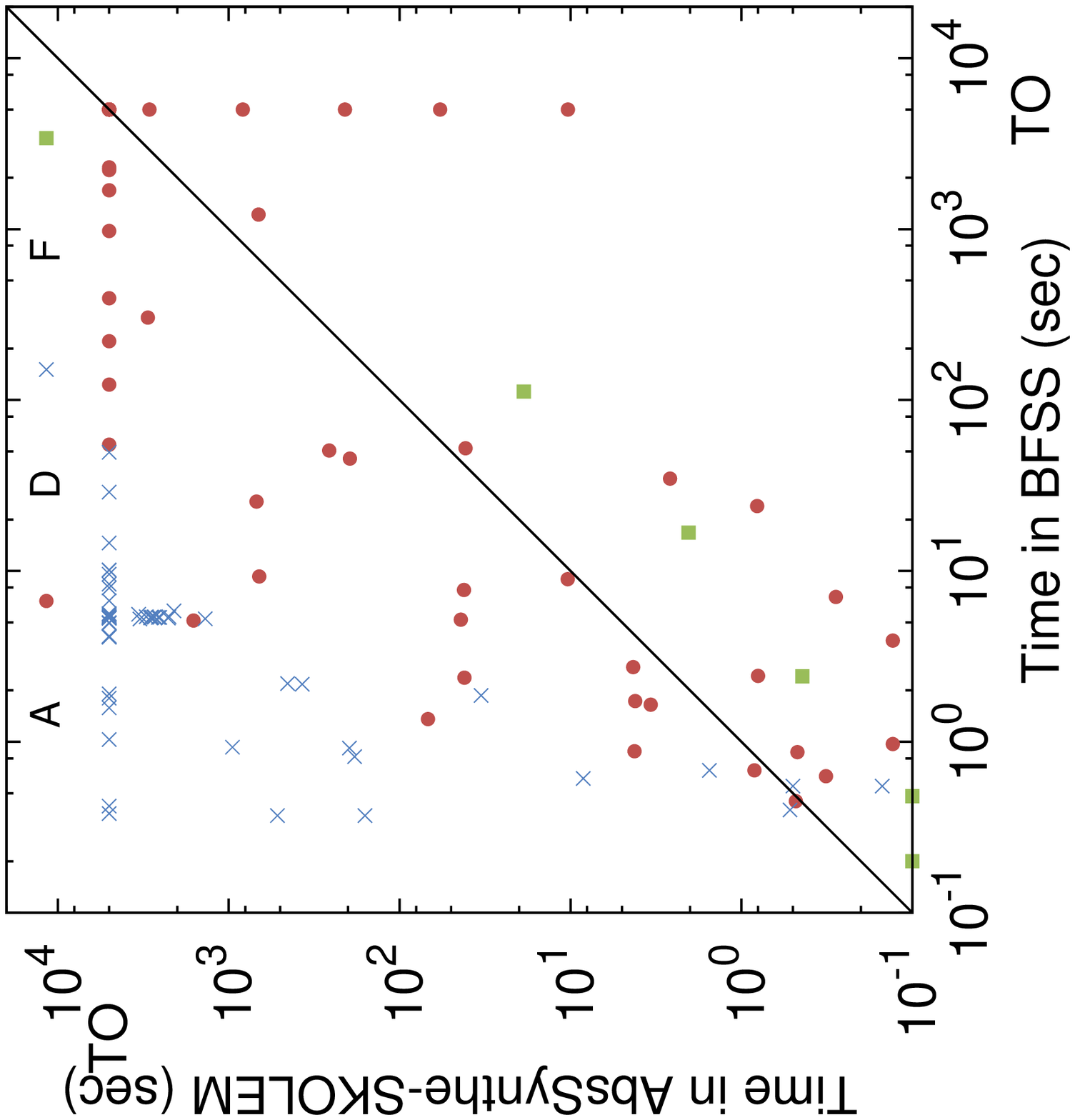} 
\end{subfigure}
\caption{$\bfss$ vs $\abssynsk$ (for legend see Figure \ref{fig:bfsscadet})} 
\label{fig:bfssabs}
\end{figure}
\noindent { $\bfss$ vs $\abssynsk$}: $\abssynsk$ was successful on $217$ benchmarks, and
          could solve $31$ benchmarks that $\bfss$ could not solve. In contrast,
$\bfss$ solved a total of $92$ benchmarks that $\abssynsk$ could not.
Figure~\ref{fig:bfssabs} shows a comparison of running times of
$\bfss$ and $\abssynsk$.

\section{Conclusion}
\label{sec:concl}
\vspace*{-3mm}
In this paper, we showed some complexity-theoretic hardness results
for the Boolean functional synthesis problem.  We then developed a
two-phase approach to solve this problem, where the first phase, which
is an efficient algorithm generating poly-sized functions surprisingly
succeeds in solving a large number of benchmarks. To explain this, we
identified sufficient conditions when phase 1 gives the correct
answer. For the remaining benchmarks, we employed the second phase of
the algorithm that uses a CEGAR-based approach and builds Skolem
functions by exploiting recent advances in SAT solvers/approximate
counters.  As future work, we wish to explore further improvements in
Phase 2, and other structural restrictions on the input
that ensure completeness of Phase 1.

\paragraph{\bfseries Acknowledgements:} We are thankful to Ajith John, Kuldeep Meel, Mate Soos,
Ocan Sankur, Lucas Martinelli Tabajara and Markus Rabe for useful discussions and for providing us with
various software tools used in the experimental comparisons. We also thank the anonymous reviewers
for insightful comments.

\bibliographystyle{splncs03}
\bibliography{ref}
\newpage
\appendix
\section{Detailed Results for individual pipelines of BFSS}
\label{sec:appendix}

As mentioned in section \ref{sec:expt}, $\bfss$ is an ensemble of two pipelines, an AIG-NNF pipeline and a BDD-wDNNF pipeline.
These two pipelines accept the
same input specification but represent them in two different ways.
The first pipeline takes the input formula as an AIG and builds an NNF
(not necessarily a wDNNF) DAG, while the second pipeline first builds an ROBDD
from the input AIG using dynamic variable reordering, and then obtains a wDNNF representation from the ROBDD
using the linear-time algorithm described in~\cite{darwiche-jacm}.
Once the NNF/wDNNF representation is built, the same algorithm is used to generate skolem functions, namely,
  Algorithm ~\ref{alg:easy} is used in Phase 1 and CEGAR-based synthesis using
$\unigen$\cite{unigen2} to sample counterexamples is used in Phase 2. In this section, we give the individual results of the two pipelines.

\subsection{Performance of the AIG-NNF pipeline}

\begin{table}[ht]
\begin{center}
\begin{tabular}{|c|c|c|c|c|c|}
    \hline {Benchmark} & {Total } & {\# Benchmarks}  & {Phase 1} & {Phase 2} & {Solved By} \\ 
     {Domain} &  {Benchmarks} & {Solved} & {Solved} &{Started} & {Phase 2}  \\ 
\hline
{QBFEval} & 383 & 133 & 122 & 110 & 11 \\ 
\hline
{Arithmetic} & 48 & 31  &  31 & 12 & 0  \\ 
\hline
{Disjunctive} &  & & & &   \\
{Decomposition} & 68  & 68 & 66 & 2 & 2  \\ 
\hline
{Factorization} & 5 & 4 & 0 & 5 & 4  \\ 
\hline
  \end{tabular}
\caption{$\bfss$: Performance Summary for AIG-NNF pipeline }
\label{tab:bfss2}
\end{center}
\end{table}

In the AIG-NNF pipeline, $\bfss$ solves a total of $236$ benchmarks, with $133$ benchmarks in QBFEval, $31$ in Arithmetic, all the $68$ benchmarks of Disjunctive Decomposition and $4$ benchmarks in Factorization.
Of the $254$ benchmarks in QBFEval (as mentioned in Section \ref{sec:expt}, we could not build succinct AIGs for the remaining benchmarks and did not run our tool on them), Phase 1 solved $122$ benchmarks and Phase 2 was started on $110$ benchmarks, of which $11$ benchmarks reached completion. Of the $48$ benchmarks in Arithmetic, Phase 1 solved $31$ and Phase $2$ was started on $12$. On the remaining $5$ Arithmetic benchmarks, Phase 1 did not reach completion. Of the $68$ Disjunctive Decomposition benchmarks, $66$ were successfully solved by Phase 1 and the remaining $2$ by Phase 2. Phase 2 had started on all the $5$ benchmarks in Factorization and  reached completion on $4$ benchmarks.

\subsubsection{Plots for the AIG-NNF pipeline}

\begin{figure}[h]
\begin{subfigure}{2.3in}
  \hspace{-1cm}
  \includegraphics[angle=-90,scale=0.28] {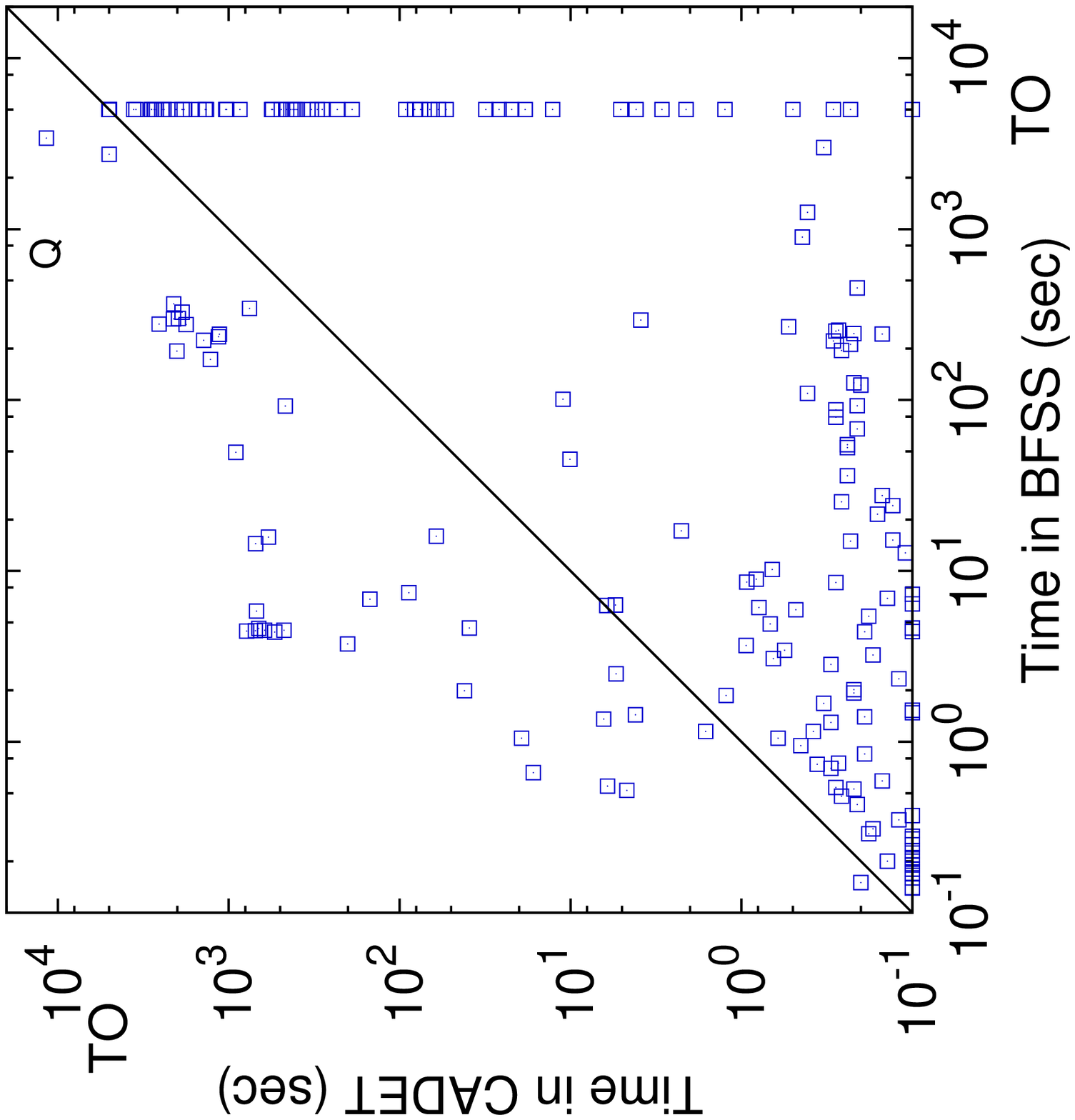} 
\end{subfigure}
\begin{subfigure}{2.3in}
  \includegraphics[angle=-90,scale=0.28] {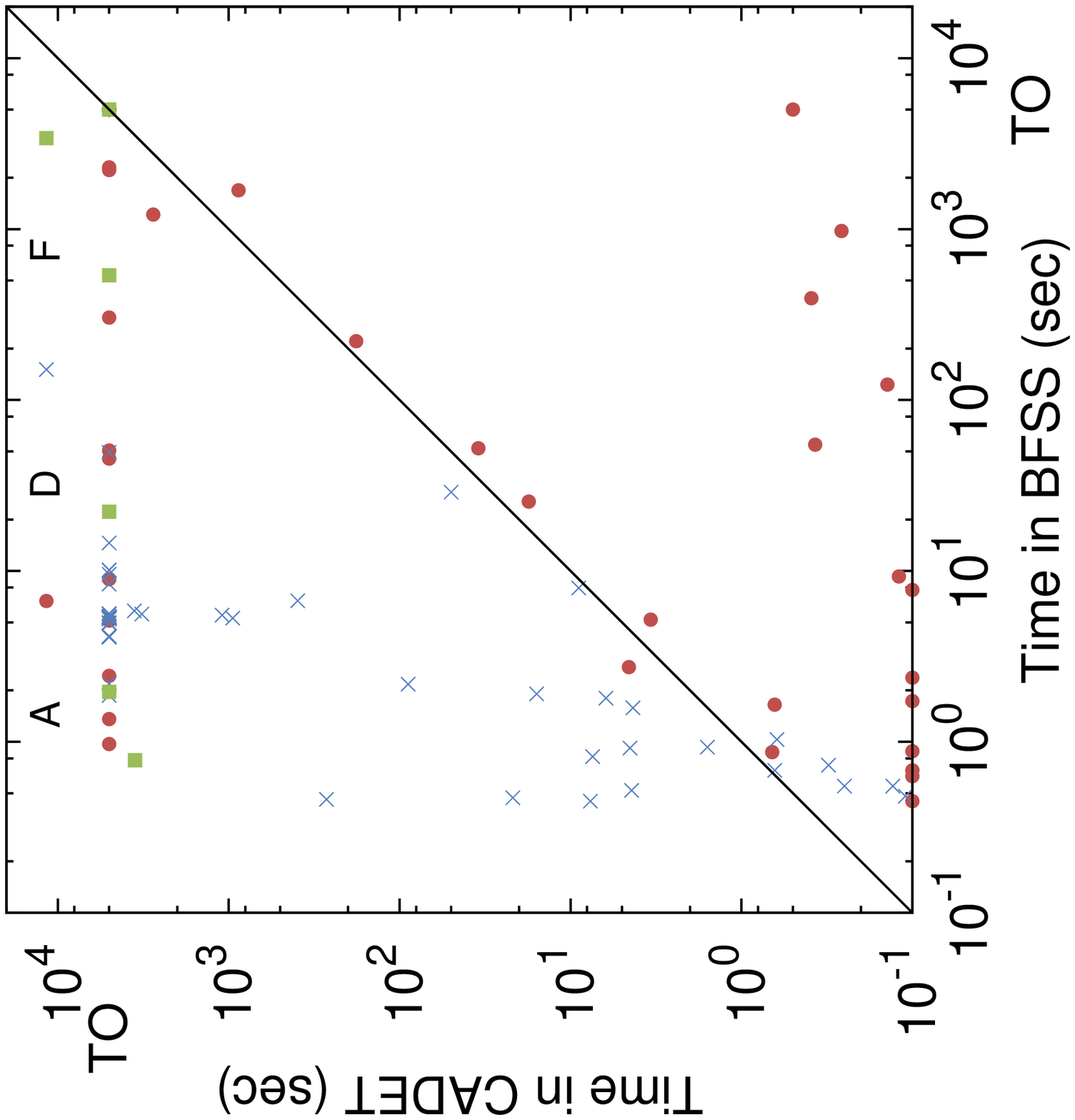} 
\end{subfigure}
\caption{$\bfss$ (AIG-NNF Pipeline) vs $\cadet$. Legend : \texttt{A}: Arithmetic, \texttt{F}: Factorization, \texttt{D}: Disjunctive decomposition \texttt{Q} QBFEval. \texttt{TO}: benchmarks for which the corresponding algorithm was unsuccessful.}
\label{fig:bfsscadetAIG}
\end{figure}

Figure \ref{fig:bfsscadetAIG} shows the performance of $\bfss$ (AIG-NNF pipeline) versus $\cadet$ for all the four benchmark domains. Amongst the four domains, $\cadet$ solved $53$ benchmarks that $\bfss$ could not solve. Of these, $52$ belonged to QBFEval and $1$ belonged to Arithmetic. On the other hand, $\bfss$ solved $58$ benchmarks that $\cadet$ could not solve. Of these, $1$ belonged to QBFEval, $10$ to Arithmetic, $3$ to Factorization and $44$ to Disjunctive Decomposition. From Figure \ref{fig:bfsscadetAIG}, we can see that while $\cadet$ takes less time than $\bfss$ on many Arithmetic and QBFEval benchmarks, on Disjunctive Decomposition and Factorization, the AIG-NNF pipeline of $\bfss$ takes less time.

\begin{figure}[h]
\centering
\begin{subfigure}{2.3in}
  \hspace{-1cm}
  \includegraphics[angle=-90,scale=0.28] {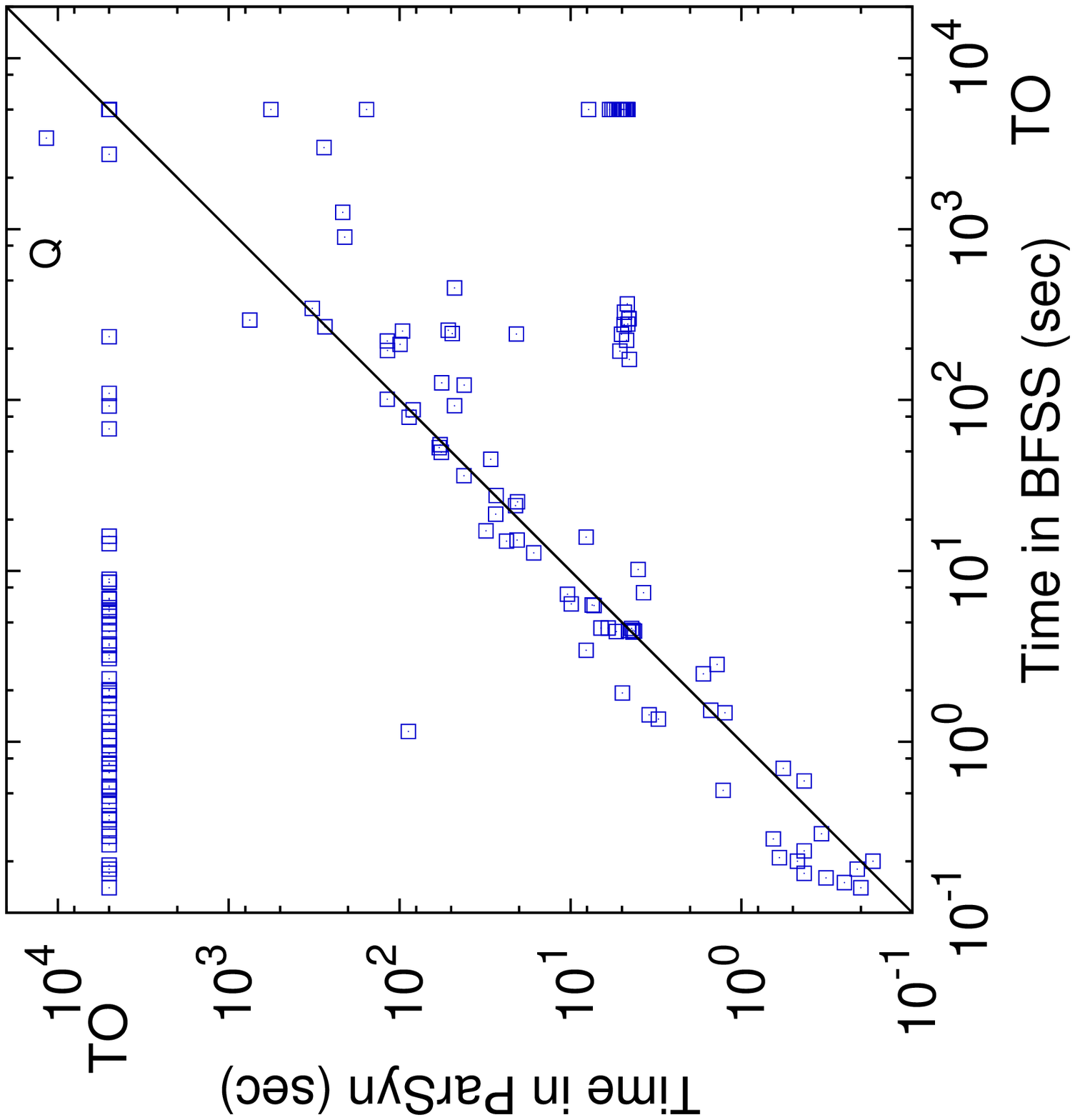} 
\end{subfigure}
\begin{subfigure}{2.3in}
  \includegraphics[angle=-90,scale=0.28] {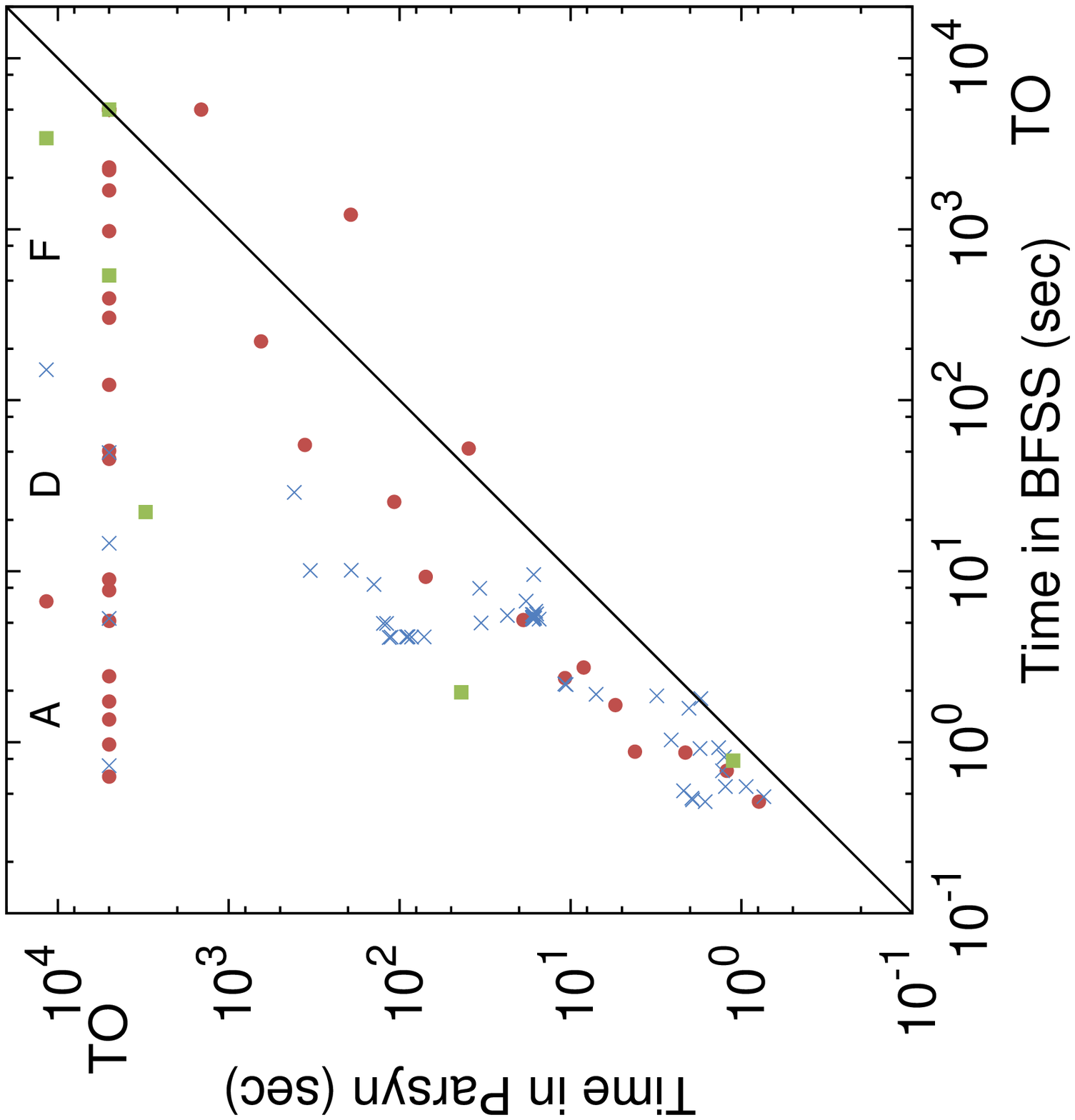} 
\end{subfigure}
\caption{$\bfss$ (AIG-NNF Pipeline) vs $\parsyn$ (for legend see Figure \ref{fig:bfsscadetAIG})}
\label{fig:bfssparsynAIG}
\end{figure}

Figure \ref{fig:bfssparsynAIG} shows the performance of $\bfss$ (AIG-NNF pipeline) versus $\parsyn$. Amongst the $4$ domains, $\parsyn$ solved $22$ benchmarks that $\bfss$ could not solve, of these $1$ benchmark belonged to the Arithmetic domain and $21$ benchmarks belonged to QBFEval. On the other hand, $\bfss$ solved $73$ benchmarks that $\parsyn$ could not solve. Of these, $51$ belonged to QBFEval, $17$ to Arithmetic and $4$ to Disjunctive Decomposition. From \ref{fig:bfsscadetAIG}, we can see that while the behaviour of $\parsyn$ and $\bfss$ is comparable for many QBFEval benchmarks, on most of the Arithmetic, Disjunctive Decomposition and Factorization benchmarks, the AIG-NNF pipeline of $\bfss$ takes less time.
\begin{figure}[h]
\centering
\begin{subfigure}{2.3in}
  \hspace{-1cm}
  \includegraphics[angle=-90,scale=0.28] {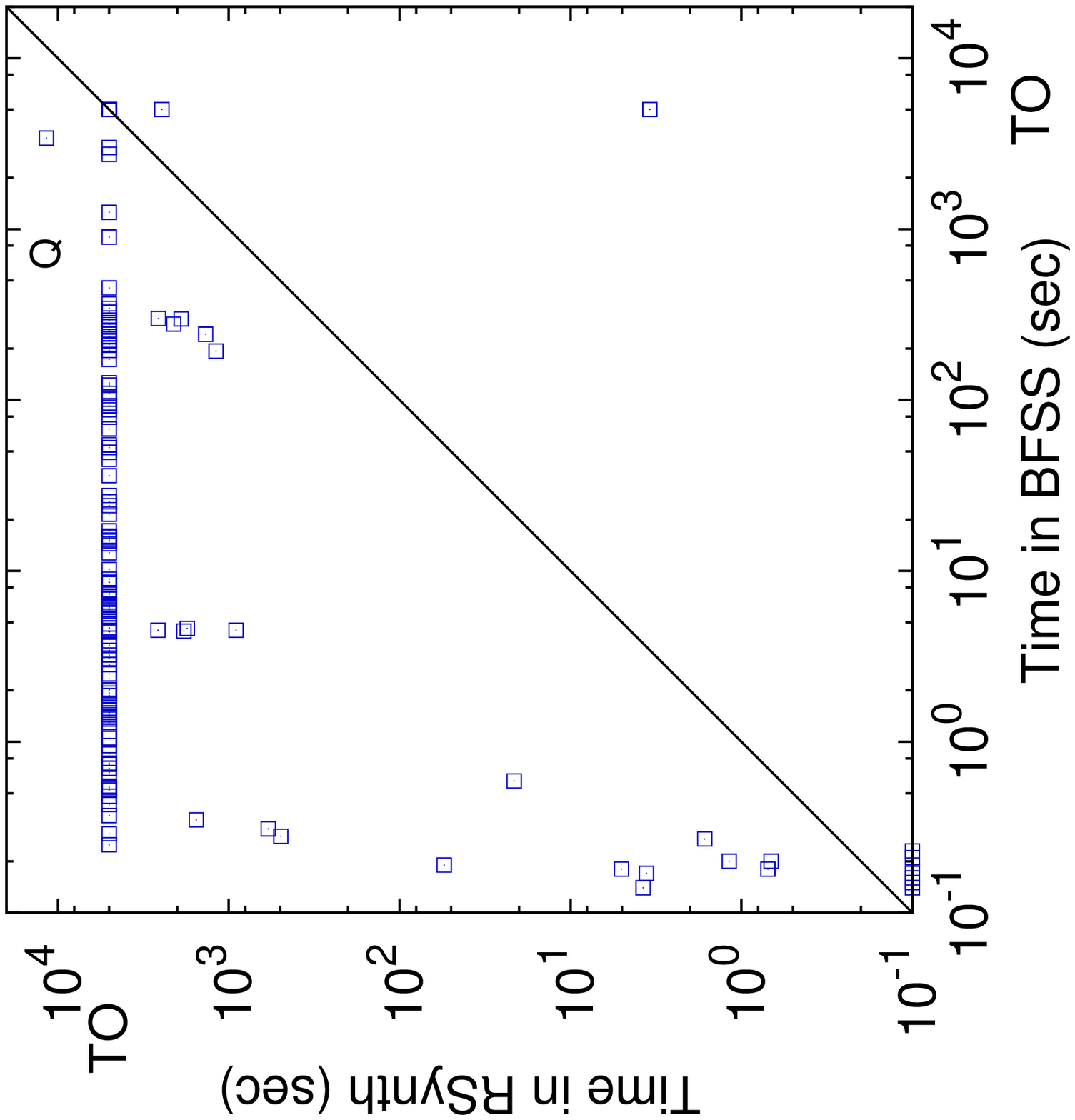} 
\end{subfigure}
\begin{subfigure}{2.3in}
  \includegraphics[angle=-90,scale=0.28] {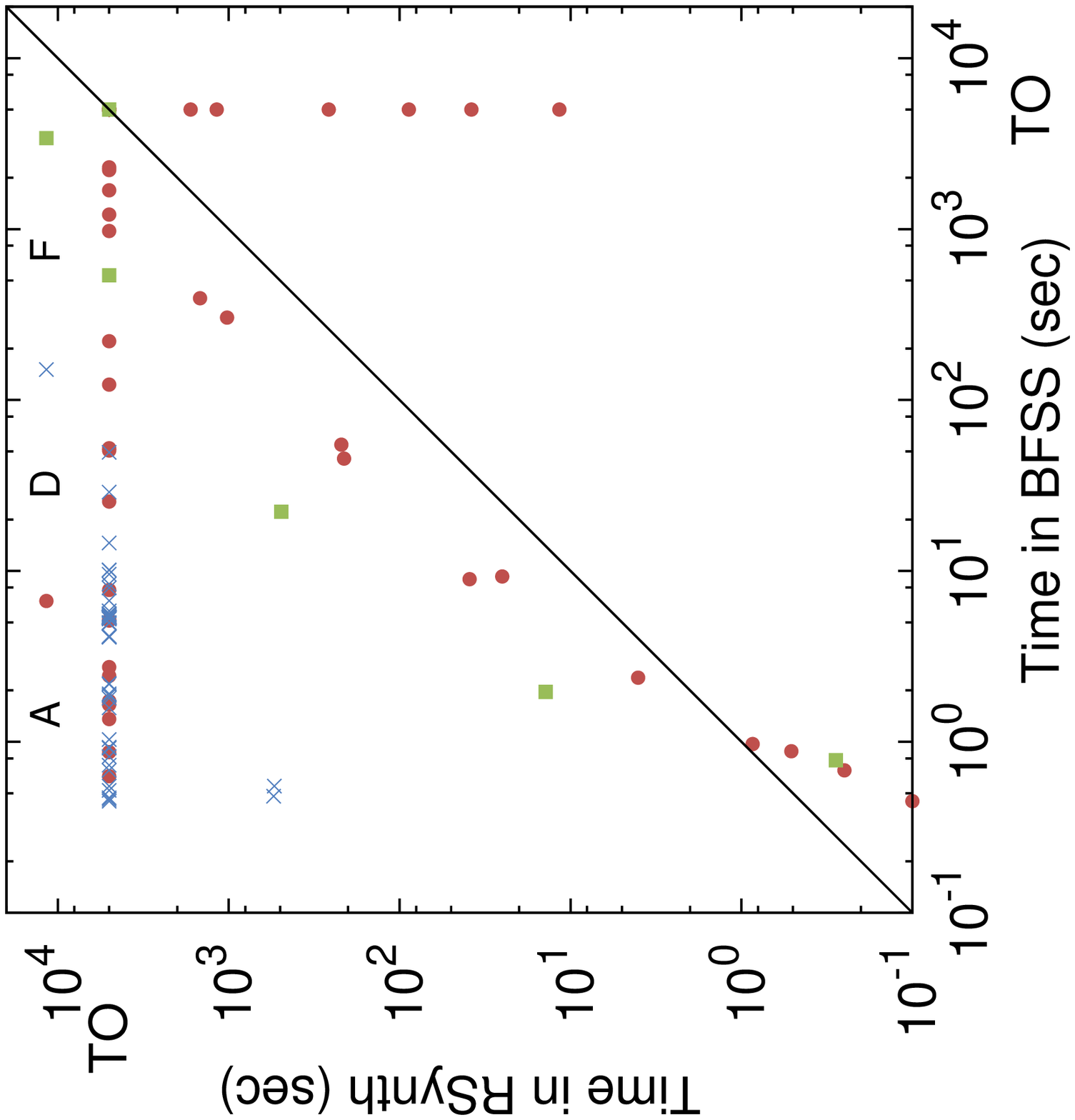} 
\end{subfigure}
\caption{$\bfss$ (AIG-NNF Pipeline) vs $\rsynth$ (for legend see Figure \ref{fig:bfsscadetAIG})} 
\label{fig:bfssrsynthAIG}
\end{figure}

Figure \ref{fig:bfssrsynthAIG} gives the comparison of the AIG-NNF pipeline of $\bfss$ and $\rsynth$. While $\rsynth$ solves $8$ benchmarks that $\bfss$ does not solve, 
$\bfss$ solves $193$ benchmarks that $\rsynth$ could not solve. Of these $106$ belonged to QBFEval, $20$ to Arithmetic, $66$ to Disjunctive Decomposition and $1$ to Factorization. Moreover, on most of the benchmarks that both the tools solved, $\bfss$ takes less time.

\begin{figure}[h]
\centering
\begin{subfigure}{2.3in}
  \hspace{-1cm}
  \includegraphics[angle=-90,scale=0.28] {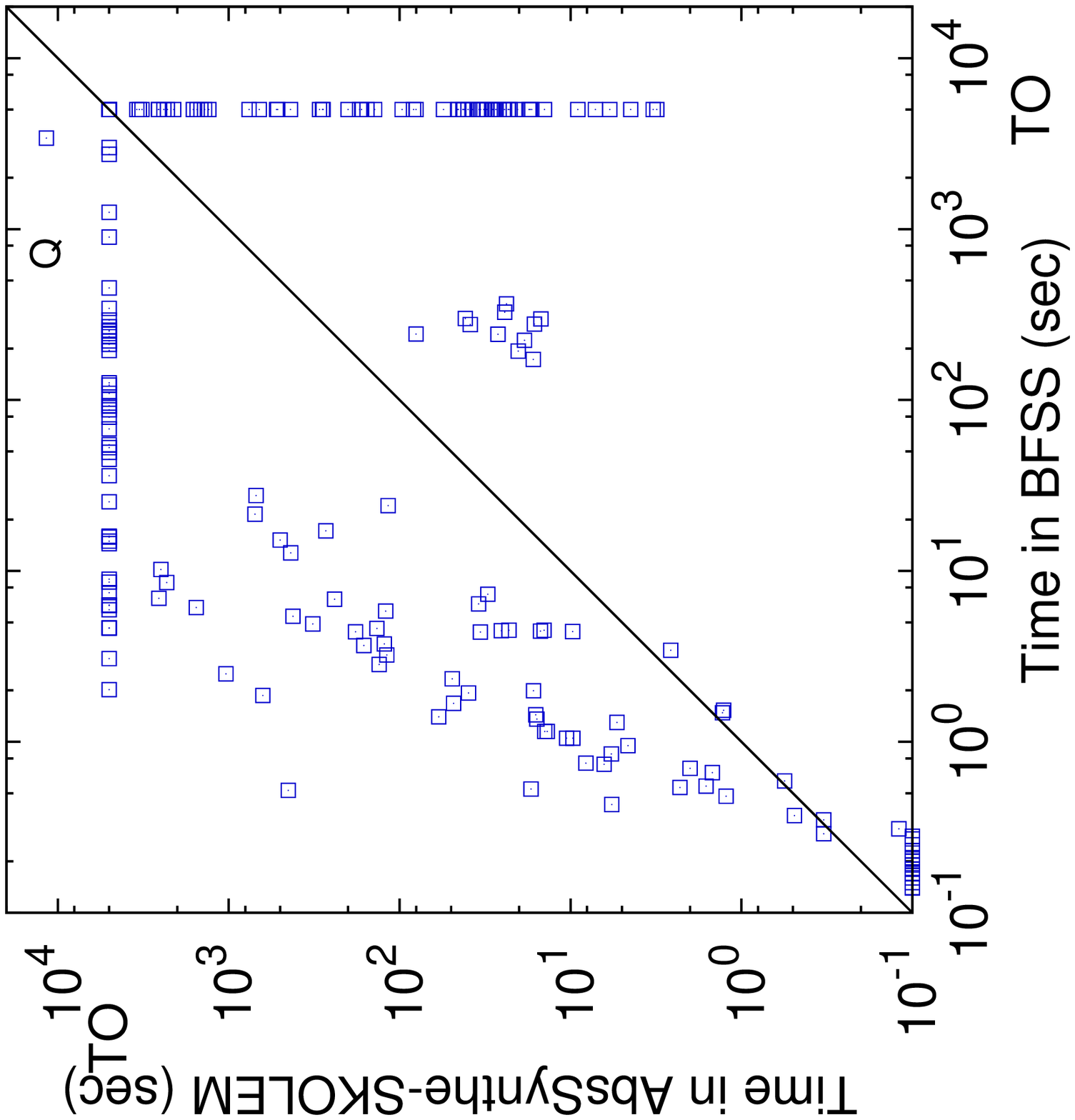} 
\end{subfigure}
\begin{subfigure}{2.3in}
  \includegraphics[angle=-90,scale=0.28] {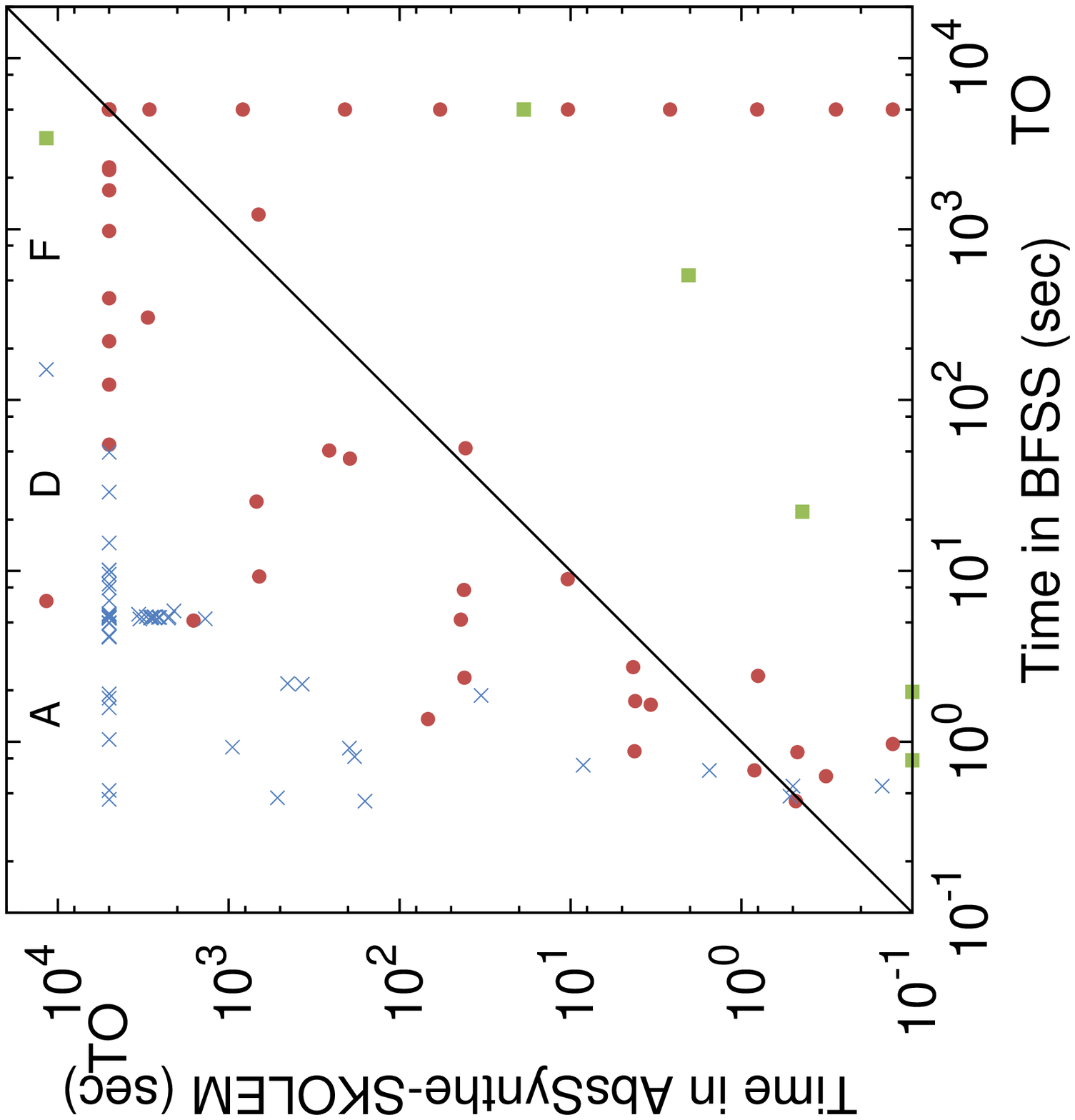} 
\end{subfigure}
\caption{$\bfss$ (AIG-NNF Pipeline) vs $\abssynsk$ (for legend see Figure \ref{fig:bfsscadetAIG})} 
\label{fig:bfssabsAIG}
\end{figure}

Figure \ref{fig:bfssabsAIG} gives the comparison of of the performance of the AIG-NNF pipeline of $\bfss$ and $\abssynsk$. While $\abssynsk$ solves $72$ benchmarks that $\bfss$ could not solve, $\bfss$ solved $91$ benchmarks that $\abssynsk$ could not solve. Of these $44$ belonged to QBFEval, $8$ to Arithmetic and $39$ to Disjunctive Decomposition.

\subsection{Performance of the BDD-wDNNF  pipeline}

In this section, we discuss the performance of the BDD-wDNNF pipeline of $\bfss$.
Recall that in this pipeline the tool builds an ROBDD
from the input AIG using dynamic variable reordering and then converts the ROBDD in a wDNNF representation. 
In this section, by $\bfss$ we mean, the BDD-wDNNF pipeline of the tool.

Table \ref{tab:bfss3} gives the performance summary of the BDD-wDNNF pipeline. Using this pipeline, the tool solved a total of $230$ benchmarks, of which $143$ belonged to QBFEval, $23$ belonged to Arithmetic, $59$ belonged to Disjunctive Decomposition and $5$ belonged to Factorization. As expected, since the representation is already in wDNNF, the skolem functions generated at end of Phase 1 were indeed exact (see Theorem \ref{lemma:init_sk_good}(b)) and we did not require to start Phase 2 on any benchmark. We also found that the memory requirements of this pipeline were higher, and for some benchmarks the tool failed because the ROBDDs (and hence resulting wDNNF representation) were large in size, resulting in out of memory errors or assertion failures in the underlying AIG library. 


\begin{table}[ht]
\begin{center}
\begin{tabular}{|c|c|c|c|c|c|}
    \hline {Benchmark} & {Total } & {\# Benchmarks}  & {Phase 1} & {Phase 2} & {Solved By} \\ 
     {Domain} &  {Benchmarks} & {Solved} & {Solved} &{Started} & {Phase 2}  \\ 
\hline
{QBFEval} & 383 & 143  & 143 & 0 & 0   \\ 
\hline
{Arithmetic} & 48 & 23  &  23 &  0 & 0   \\ 
\hline
{Disjunctive} &  & & & &   \\
{Decomposition} & 68  & 59 & 59 & 0 & 0   \\ 
\hline
{Factorization} & 5 &  5 & 5 & 0  & 0  \\ 
\hline
  \end{tabular}
\caption{$\bfss$: Performance Summary }
\label{tab:bfss3}
\end{center}
\end{table}

\subsubsection{Plots for the BDD-wDNNF pipeline} 

\begin{figure}[h]
\centering
\begin{subfigure}{2.3in}
  \hspace{-1cm}
  \includegraphics[angle=-90,scale=0.28] {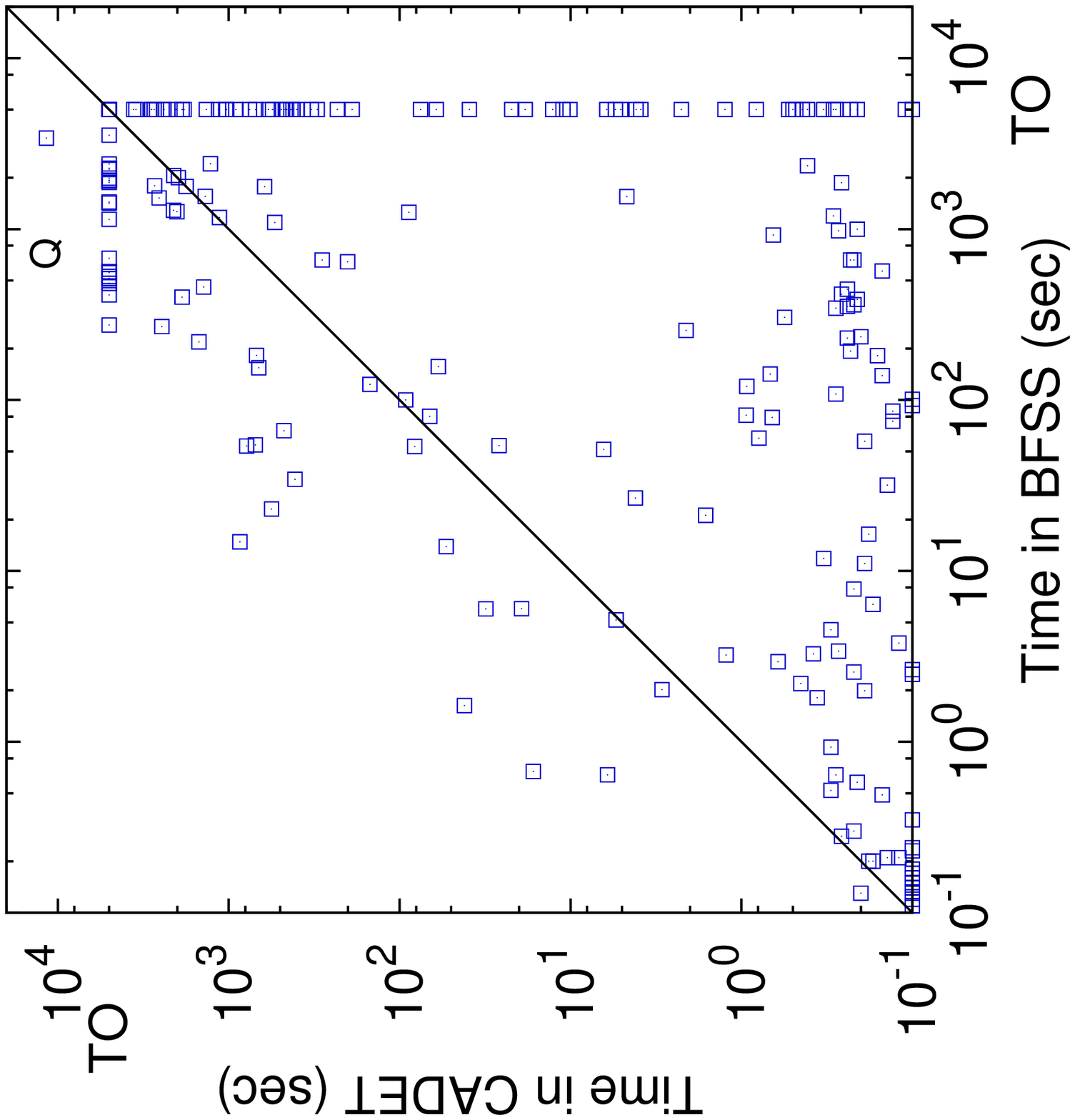} 
\end{subfigure}
\begin{subfigure}{2.3in}
  \includegraphics[angle=-90,scale=0.28] {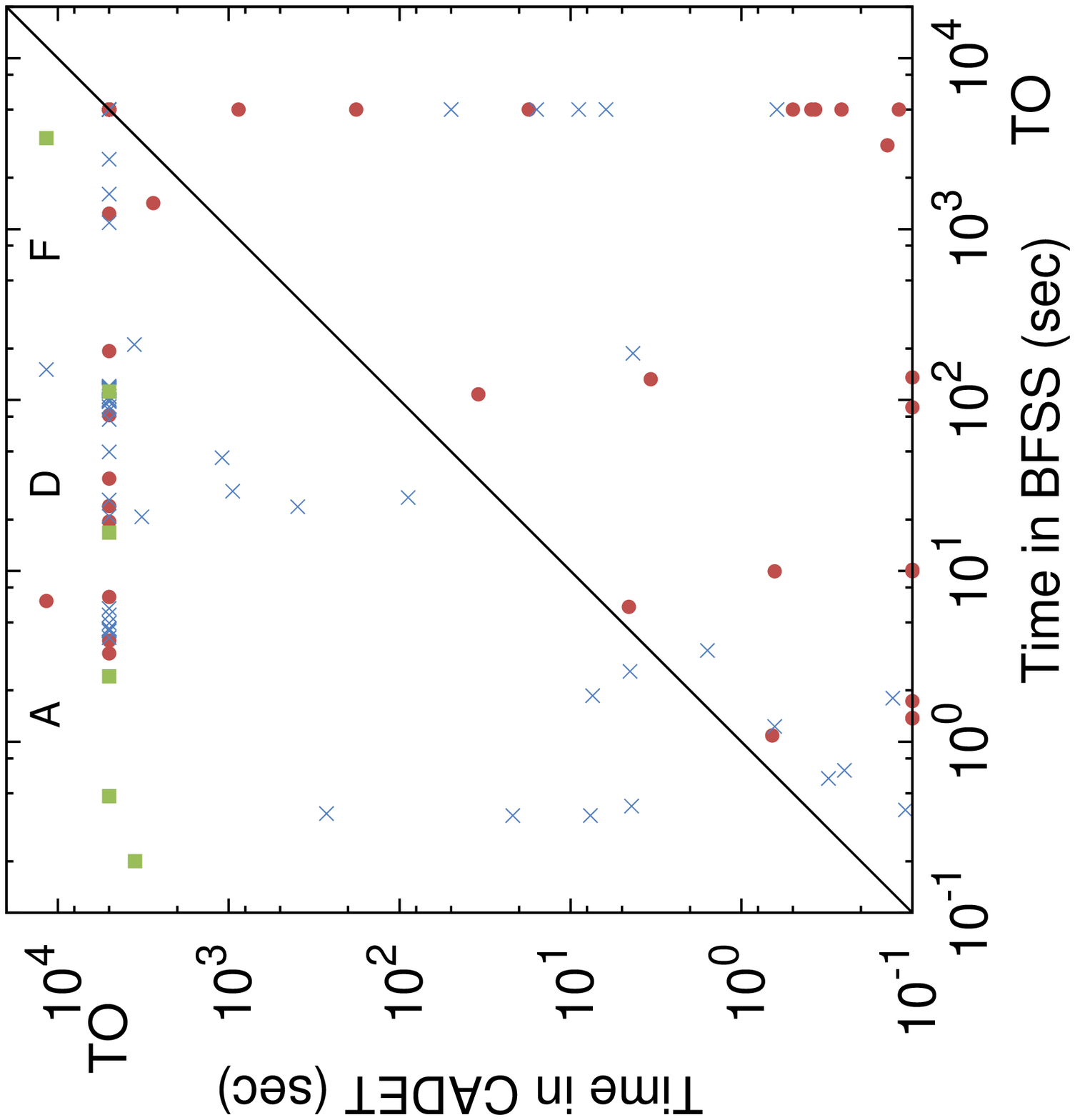} 
\end{subfigure}
\vspace{0.3cm}
\caption{$\bfss$ (BDD-wDNNF  Pipeline) vs $\cadet$ (for legend see Figure \ref{fig:bfsscadetAIG})}
\label{fig:bfsscadetBDD}
\end{figure}

Figure \ref{fig:bfsscadetAIG} gives the  performance of $\bfss$ versus $\cadet$. The performance of $\cadet$ and $\bfss$ is comparable, with $\cadet$ solving $74$ benchmarks across all domains that $\bfss$ could not and $\bfss$ solving $73$ benchmarks that $\cadet$ could not. While $\cadet$ takes less time on many QBFEval benchmarks, on many Arithmetic, Disjunctive Decomposition and Factorization Benchmarks, the BDD-wDNNF pipeline of $\bfss$ takes less time.

\begin{figure}[h]
\centering
\begin{subfigure}{2.3in}
  \hspace{-1cm}
  \includegraphics[angle=-90,scale=0.28] {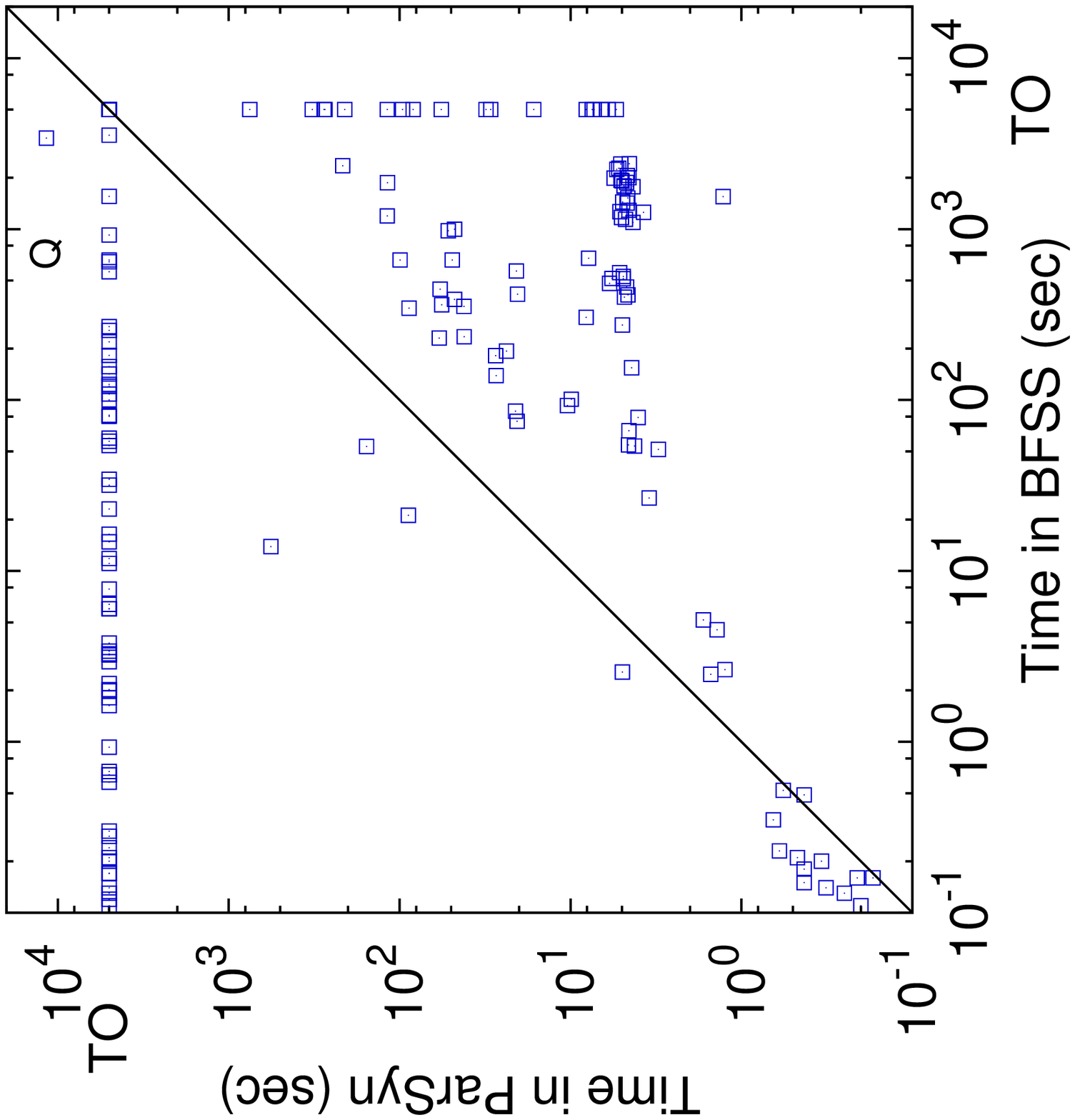} 
\end{subfigure}
\begin{subfigure}{2.3in}
  \includegraphics[angle=-90,scale=0.28] {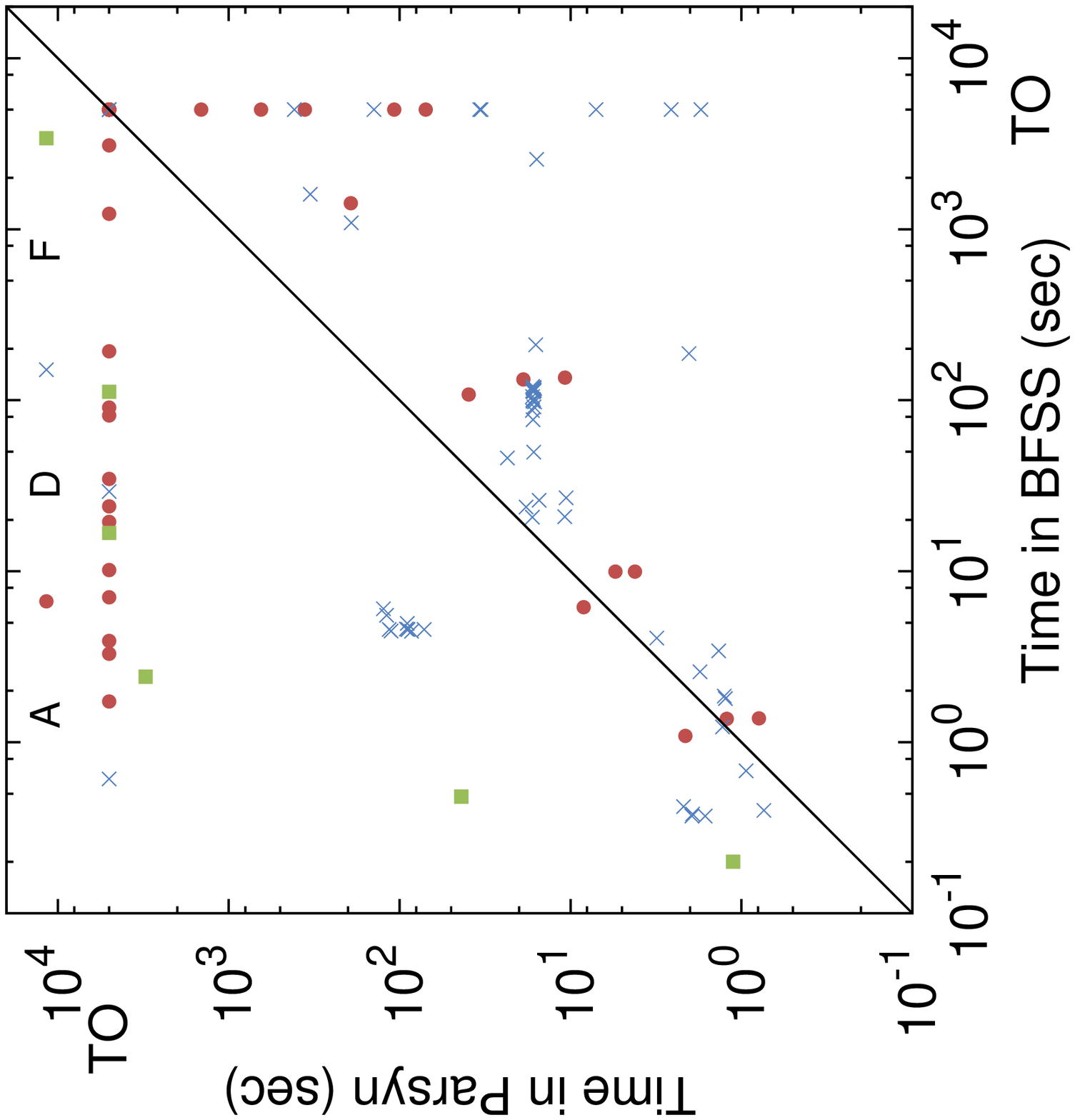} 
\end{subfigure}
\vspace{0.3cm}
\caption{$\bfss$ (BDD-wDNNF  Pipeline)vs $\parsyn$ (for legend see Figure \ref{fig:bfsscadetAIG})}
\label{fig:bfssparsynBDD}
\end{figure}
Figure \ref{fig:bfssparsynBDD} gives the performance of $\bfss$ versus $\parsyn$. While $\parsyn$ could solve $30$ benchmarks across all domains that $\bfss$ could not, the BDD-wDNNF pipeline of $\bfss$ solved $75$ benchmarks that $\parsyn$ could not.

\begin{figure}[h]
\centering
\begin{subfigure}{2.3in}
  \hspace{-1cm}
  \includegraphics[angle=-90,scale=0.28] {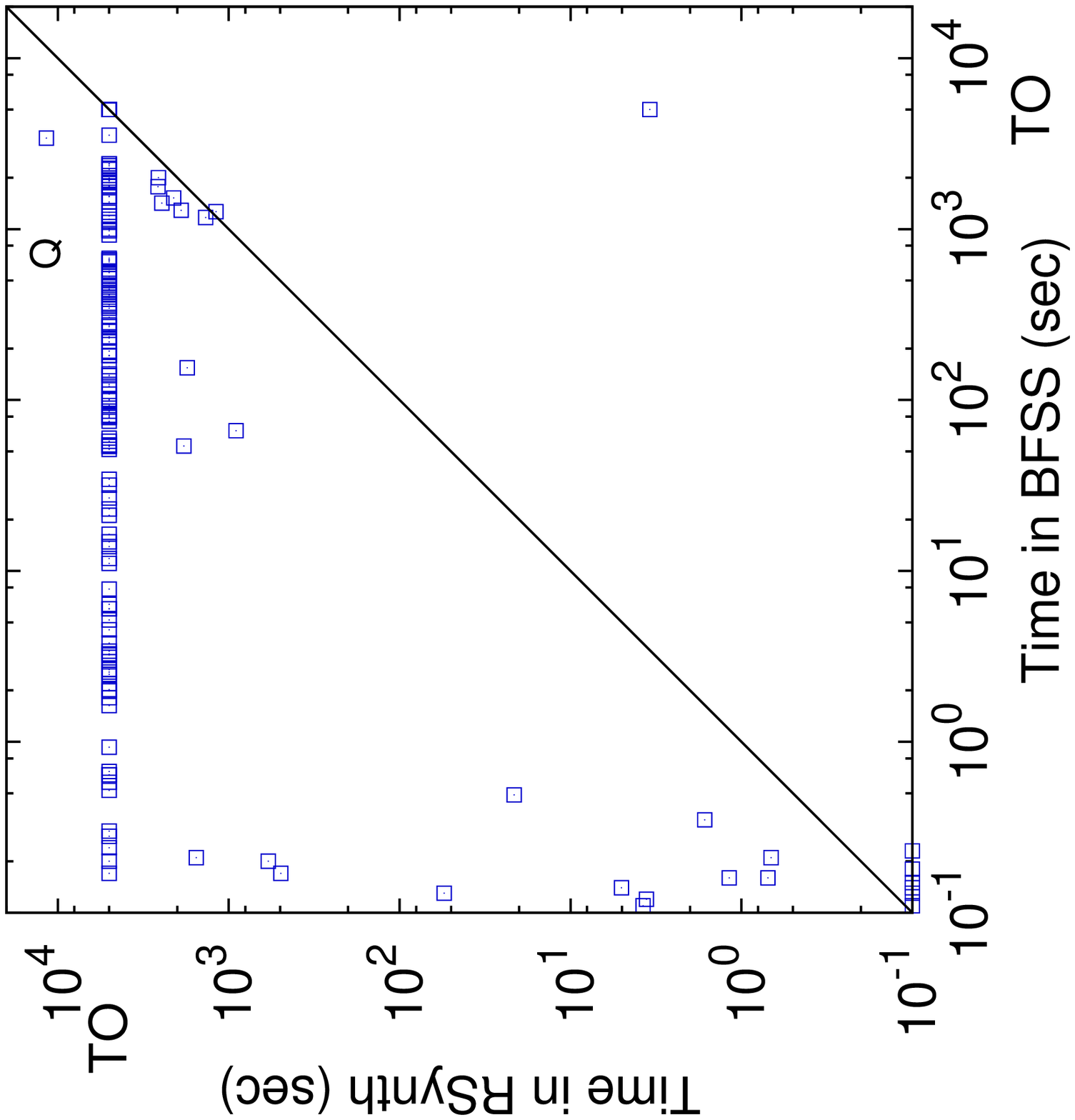} 
\end{subfigure}
\begin{subfigure}{2.3in}
  \includegraphics[angle=-90,scale=0.28] {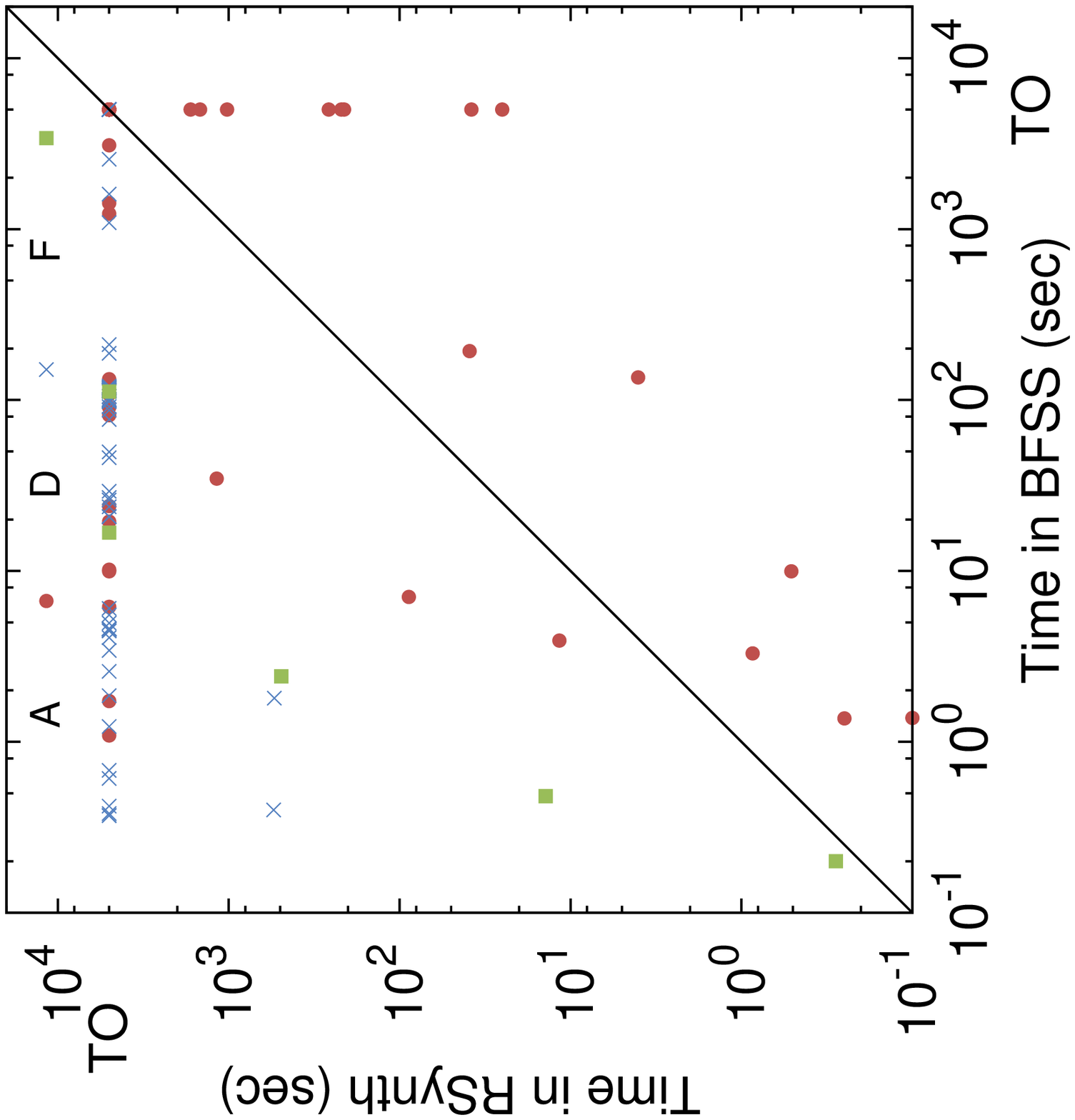} 
\end{subfigure}
\vspace{0.3cm}
\caption{$\bfss$ (BDD-wDNNF  Pipeline) vs $\rsynth$ (for legend see Figure \ref{fig:bfsscadetAIG})} 
\label{fig:bfssrsynthBDD}
\end{figure}
Figure \ref{fig:bfssrsynthBDD} gives the performance of $\bfss$ versus $\rsynth$. While $\rsynth$ could solve $9$ benchmarks across all domains that $\bfss$ could not, the BDD-wDNNF pipeline of $\bfss$ solved $188$ benchmarks that $\rsynth$ could not. Furthermore from Figure \ref{fig:bfssrsynthBDD}, we can see that on most benchmarks, which both the tools could solve, $\bfss$ takes less time.

\begin{figure}[h]
\centering
\begin{subfigure}{2.3in}
  \hspace{-1cm}
  \includegraphics[angle=-90,scale=0.28] {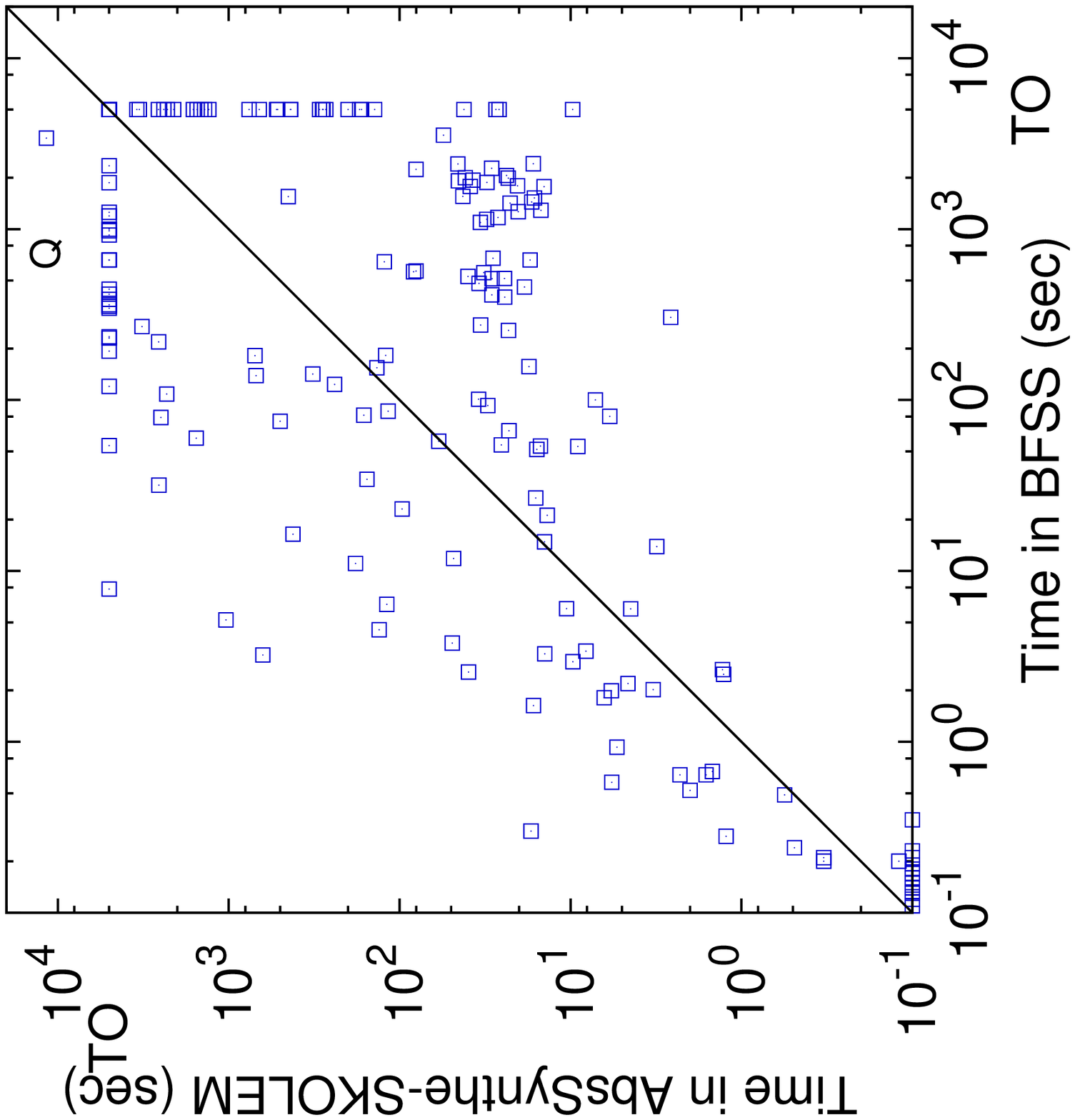} 
\end{subfigure}
\begin{subfigure}{2.3in}
  \includegraphics[angle=-90,scale=0.28] {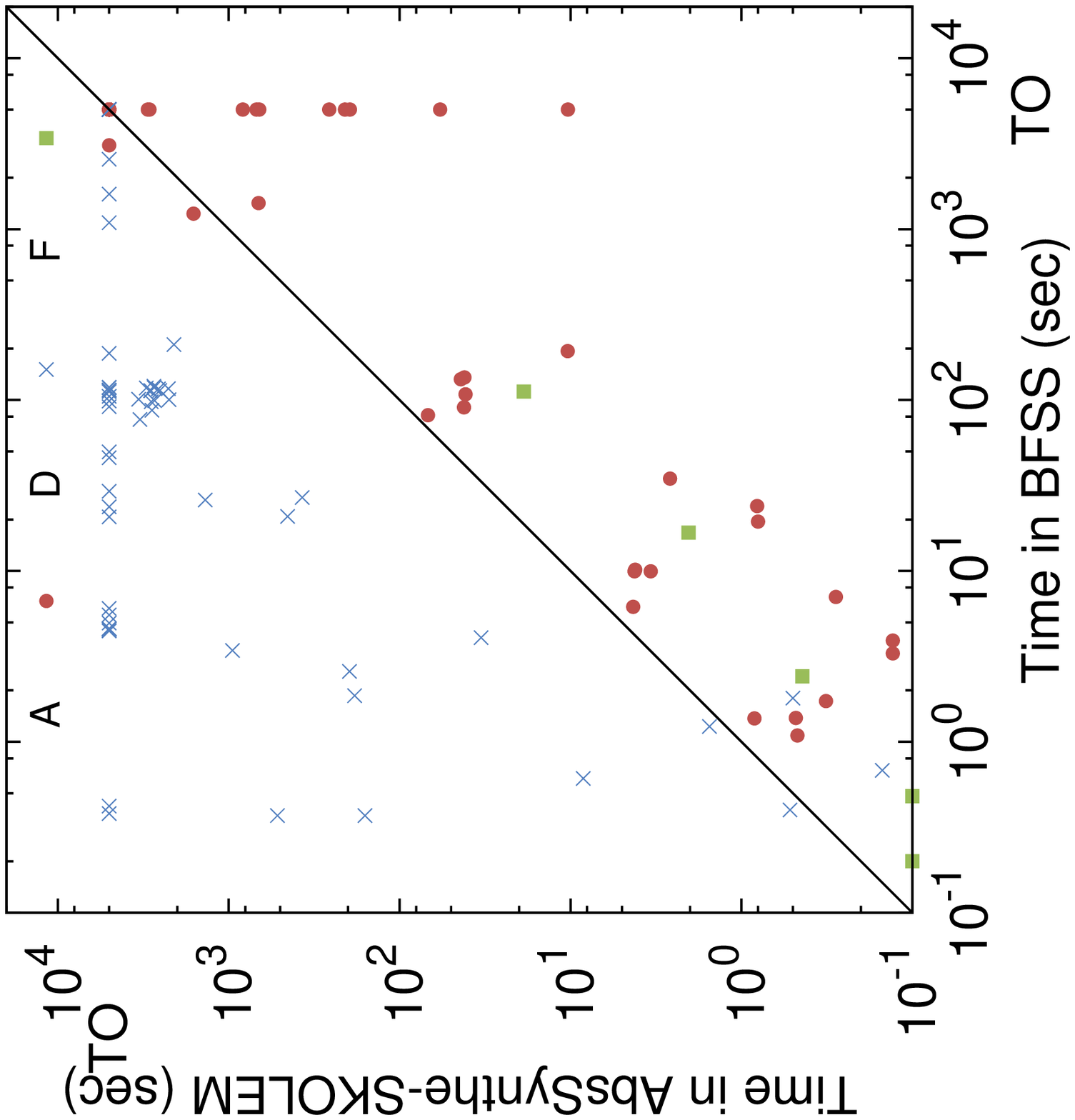} 
\end{subfigure}
\vspace{0.3cm}
\caption{$\bfss$ (BDD-wDNNF  Pipeline) vs $\abssynsk$ (for legend see Figure \ref{fig:bfsscadetAIG})} 
\label{fig:bfssabsBDD}
\end{figure}

Figure \ref{fig:bfssabsBDD} gives the performance of $\bfss$ versus $\abssynsk$. While $\abssynsk$ could solve $39$ benchmarks across all domains that $\bfss$ could not, the BDD-wDNNF pipeline of $\bfss$ solved $52$ benchmarks which could not be solved by $\abssynsk$.

\subsection{Comparison of the two pipelines}

\begin{figure}[h]
\centering
\begin{subfigure}{2.3in}
  \hspace{-1cm}
  \includegraphics[angle=-90,scale=0.28] {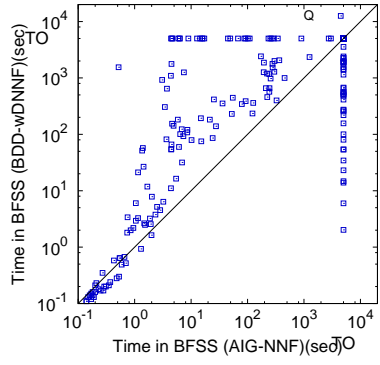} 
\end{subfigure}
\begin{subfigure}{2.3in}
  \includegraphics[angle=-90,scale=0.28] {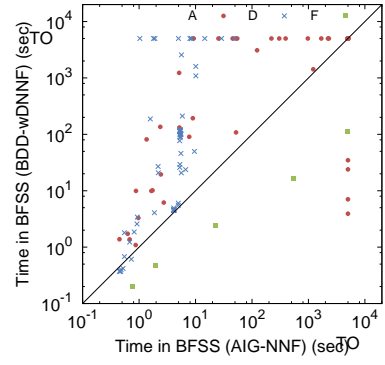} 
\end{subfigure}
\vspace{0.3cm}
\caption{$\bfss$ (AIG-NNF) vs $\bfss$ (BDD-wDNNF) (for legend see Figure \ref{fig:bfsscadetAIG})}
\label{fig:bfssvsbfss}
\end{figure}
Figure \ref{fig:bfssvsbfss} compares the performances of the two pipelines.  
We can see that while there were some benchmarks which only one of the pipelines could solve, apart from Factorization benchmarks, for most of the QBFEval, Arithmetic and Disjunctive Decomposition Benchmarks, the time taken by the AIG-NNF pipeline was less than the time taken by the BDD-wDNNF pipeline.

\end{document}